\newcommand{\cc}{\mbox{$\cal C$} }
\newcommand{\cd}{\mbox{$\cal D$} }
\newcommand{\co}{\mbox{$\cal O$} }
\newcommand{\cp}{\mbox{$\cal P$} }
\newcommand{\cs}{\mbox{$\cal S$} }
\newcommand{\kw}{\rule{2mm}{2mm}}
\newcommand{\qedd}{\hfill \kw}
\newtheorem{pro}{Proposition}
\newtheorem{obs}{Observation}
\newtheorem{thm}{Theorem}
\newtheorem{lem}{Lemma}
\newtheorem{cor}{Corollary}
\newtheorem{op}{Open Problem}
\newtheorem{claim1}{Claim}
\newcommand{\sm}{\smallskip}
\newcommand{\ms}{\medskip}
\newcommand{\noi}{\noindent}
\newcommand{\tabu}{\hspace{15pt}}
\newenvironment{proof}{\noindent\textbf{Proof.}}{\hfill$\qedd$\medskip} 
\newcommand{\grafGR}{
  \psset{unit=2.0cm}  
  \begin{pspicture}(-0.2,-0.2)(5.20,3.80)
    \pnode(3.70,3.10){v1}
    \pnode(2.50,2.30){v2}
    \pnode(3.10,2.30){v3}
    \pnode(4.30,2.30){v4}
    \pnode(4.90,2.30){v5}
    \pnode(1.30,1.50){v6}
    \pnode(1.90,1.50){v7}
    \pnode(3.10,1.50){v8}
    \pnode(3.70,1.50){v9}
    \pnode(0.10,0.70){v10}
    \pnode(0.70,0.70){v11}
    \pnode(1.90,0.70){v12}
    \pnode(2.50,0.70){v13}
    \pnode(3.70,2.30){v14}
    \pnode(0.10,0.10){v15}
    \pnode(0.70,0.10){v16}
    \pnode(1.90,0.10){v17}
    \psline[linecolor=black,linewidth=0.2pt](v2)(v1)
    \psline[linecolor=black,linewidth=0.2pt](v3)(v1)
    \psline[linecolor=black,linewidth=0.2pt](v4)(v1)
    \psline[linecolor=black,linewidth=0.2pt](v5)(v1)
    \psline[linecolor=black,linewidth=0.2pt](v6)(v2)
    \psline[linecolor=black,linewidth=0.2pt](v7)(v2)
    \psline[linecolor=black,linewidth=0.2pt](v8)(v2)
    \psline[linecolor=black,linewidth=0.2pt](v9)(v2)
    \psline[linecolor=black,linewidth=0.2pt](v10)(v6)
    \psline[linecolor=black,linewidth=0.2pt](v11)(v6)
    \psline[linecolor=black,linewidth=0.2pt](v12)(v6)
    \psline[linecolor=black,linewidth=0.2pt](v13)(v6)
    \psline[linecolor=black,linewidth=0.2pt](v14)(v1)
    \psline[linecolor=black,linewidth=0.2pt](v15)(v10)
    \psline[linecolor=black,linewidth=0.2pt](v16)(v11)
    \psline[linecolor=black,linewidth=0.2pt](v17)(v12)
    \pscircle[fillcolor=white, fillstyle=solid,linewidth=0.1pt](v1){0.09}
      \uput[34.29](v1){${v}$}
     \rput(v1){${1}$}
    \pscircle[fillcolor=white, fillstyle=solid,linewidth=0.1pt](v2){0.09}
      \uput[34.99](v2){${x}$}
     \rput(v2){${1}$}
    \pscircle[fillcolor=white, fillstyle=solid,linewidth=0.1pt](v3){0.09}
      \uput[23.63](v3){${y}$}
      \rput(v3){${1}$}
    \pscircle[fillcolor=white, fillstyle=solid,linewidth=0.1pt](v4){0.09}
      \uput[14.04](v4){${t}$}
            \rput(v4){${1}$}
    \pscircle[fillcolor=white, fillstyle=solid,linewidth=0.1pt](v5){0.09}
      \uput[11.63](v5){${w}$}
     \rput(v5){${1}$}
    \pscircle[fillcolor=white, fillstyle=solid,linewidth=0.1pt](v6){0.09}
      \uput[0](v6){${x_1}$}
      \rput(v6){${2}$}
    \pscircle[fillcolor=white, fillstyle=solid,linewidth=0.1pt](v7){0.09}
      \uput[0](v7){${x_2}$}
      \rput(v7){${3}$}
    \pscircle[fillcolor=white, fillstyle=solid,linewidth=0.1pt](v8){0.09}
      \uput[0](v8){${x_3}$}
    \rput(v8){${2}$}
    \pscircle[fillcolor=white, fillstyle=solid,linewidth=0.1pt](v9){0.09}
      \uput[0](v9){${x_4}$}
     \rput(v9){${3}$}
    \pscircle[fillcolor=white, fillstyle=solid,linewidth=0.1pt](v10){0.09}
      \uput[0](v10){${x'_1}$}
      \rput(v10){${1}$}
    \pscircle[fillcolor=white, fillstyle=solid,linewidth=0.1pt](v11){0.09}
      \uput[0](v11){${x'_2}$}
      \rput(v11){${2}$}
    \pscircle[fillcolor=white, fillstyle=solid,linewidth=0.1pt](v12){0.09}
      \uput[-0](v12){${x'_3}$}
      \rput(v12){${4}$}
    \pscircle[fillcolor=white, fillstyle=solid,linewidth=0.1pt](v13){0.09}
     \uput[0](v13){${x'_4}$}
      \rput(v13){${\alpha}$}
    \pscircle[fillcolor=white, fillstyle=solid,linewidth=0.1pt](v14){0.09}
      \uput[17.65](v14){${z}$}
            \rput(v14){${1}$}
    \pscircle[fillcolor=white, fillstyle=solid,linewidth=0.1pt](v15){0.09}
      \uput[0](v15){${p}$}
     \rput(v15){${2}$}
    \pscircle[fillcolor=white, fillstyle=solid,linewidth=0.1pt](v16){0.09}
      \uput[0](v16){${q}$}
      \rput(v16){${2}$}
    \pscircle[fillcolor=white, fillstyle=solid,linewidth=0.1pt](v17){0.09}
      \uput[0](v17){${r}$}
      \rput(v17){${2}$}
  \end{pspicture}
}\newcommand{\grafGS}{
  \psset{unit=2.0cm}  
  \begin{pspicture}(-0.2,-0.2)(5.20,3.80)
    \pnode(3.70,3.10){v1}
    \pnode(2.50,2.30){v2}
    \pnode(3.10,2.30){v3}
    \pnode(4.30,2.30){v4}
    \pnode(4.90,2.30){v5}
    \pnode(1.30,1.50){v6}
    \pnode(1.90,1.50){v7}
    \pnode(3.10,1.50){v8}
    \pnode(3.70,1.50){v9}
    \pnode(0.10,0.70){v10}
    \pnode(0.70,0.70){v11}
    \pnode(1.90,0.70){v12}
    \pnode(2.50,0.70){v13}
    \pnode(3.70,2.30){v14}
    \pnode(0.10,0.10){v15}
    \pnode(0.70,0.10){v16}
    \pnode(1.90,0.10){v17}
    \psline[linecolor=black,linewidth=0.2pt](v2)(v1)
    \psline[linecolor=black,linewidth=0.2pt](v3)(v1)
    \psline[linecolor=black,linewidth=0.2pt](v4)(v1)
    \psline[linecolor=black,linewidth=0.2pt](v5)(v1)
    \psline[linecolor=black,linewidth=0.2pt](v6)(v2)
    \psline[linecolor=black,linewidth=0.2pt](v7)(v2)
    \psline[linecolor=black,linewidth=0.2pt](v8)(v2)
    \psline[linecolor=black,linewidth=0.2pt](v9)(v2)
    \psline[linecolor=black,linewidth=0.2pt](v10)(v6)
    \psline[linecolor=black,linewidth=0.2pt](v11)(v6)
    \psline[linecolor=black,linewidth=0.2pt](v12)(v6)
    \psline[linecolor=black,linewidth=0.2pt](v13)(v6)
    \psline[linecolor=black,linewidth=0.2pt](v14)(v1)
    \psline[linecolor=black,linewidth=0.2pt](v15)(v10)
    \psline[linecolor=black,linewidth=0.2pt](v16)(v11)
    \psline[linecolor=black,linewidth=0.2pt](v17)(v12)
    \pscircle[fillcolor=white, fillstyle=solid,linewidth=0.1pt](v1){0.09}
      \uput[34.29](v1){${v}$}
     \rput(v1){${1}$}
    \pscircle[fillcolor=white, fillstyle=solid,linewidth=0.1pt](v2){0.09}
      \uput[34.99](v2){${x}$}
     \rput(v2){${1}$}
    \pscircle[fillcolor=white, fillstyle=solid,linewidth=0.1pt](v3){0.09}
      \uput[23.63](v3){${y}$}
      \rput(v3){${1}$}
    \pscircle[fillcolor=white, fillstyle=solid,linewidth=0.1pt](v4){0.09}
      \uput[14.04](v4){${t}$}
            \rput(v4){${1}$}
    \pscircle[fillcolor=white, fillstyle=solid,linewidth=0.1pt](v5){0.09}
      \uput[11.63](v5){${w}$}
     \rput(v5){${1}$}
    \pscircle[fillcolor=white, fillstyle=solid,linewidth=0.1pt](v6){0.09}
      \uput[0](v6){${x_1}$}
      \rput(v6){${2}$}
    \pscircle[fillcolor=white, fillstyle=solid,linewidth=0.1pt](v7){0.09}
      \uput[0](v7){${x_2}$}
      \rput(v7){${3}$}
    \pscircle[fillcolor=white, fillstyle=solid,linewidth=0.1pt](v8){0.09}
      \uput[0](v8){${x_3}$}
    \rput(v8){${2}$}
    \pscircle[fillcolor=white, fillstyle=solid,linewidth=0.1pt](v9){0.09}
      \uput[0](v9){${x_4}$}
     \rput(v9){${3}$}
    \pscircle[fillcolor=white, fillstyle=solid,linewidth=0.1pt](v10){0.09}
      \uput[0](v10){${x'_1}$}
      \rput(v10){${1}$}
    \pscircle[fillcolor=white, fillstyle=solid,linewidth=0.1pt](v11){0.09}
      \uput[0](v11){${x'_2}$}
      \rput(v11){${4}$}
    \pscircle[fillcolor=white, fillstyle=solid,linewidth=0.1pt](v12){0.09}
      \uput[0](v12){${x'_3}$}
      \rput(v12){${5}$}
    \pscircle[fillcolor=white, fillstyle=solid,linewidth=0.1pt](v13){0.09}
     \uput[0](v13){${x'_4}$}
      \rput(v13){${\alpha}$}
    \pscircle[fillcolor=white, fillstyle=solid,linewidth=0.1pt](v14){0.09}
      \uput[17.65](v14){${z}$}
            \rput(v14){${1}$}
    \pscircle[fillcolor=white, fillstyle=solid,linewidth=0.1pt](v15){0.09}
      \uput[0](v15){${p}$}
     \rput(v15){${2}$}
    \pscircle[fillcolor=white, fillstyle=solid,linewidth=0.1pt](v16){0.09}
      \uput[0](v16){${q}$}
      \rput(v16){${2}$}
    \pscircle[fillcolor=white, fillstyle=solid,linewidth=0.1pt](v17){0.09}
      \uput[0](v17){${r}$}
      \rput(v17){${2}$}
  \end{pspicture}
}
\newcommand{\grafGT}{
  \psset{unit=2.0cm}  
  \begin{pspicture}(-0.2,-0.2)(4.00,3.80)
    \pnode(2.50,3.10){v1}
    \pnode(1.30,2.30){v2}
    \pnode(1.90,2.30){v3}
    \pnode(3.10,2.30){v4}
    \pnode(3.70,2.30){v5}
    \pnode(0.10,1.50){v6}
    \pnode(0.70,1.50){v7}
    \pnode(1.90,1.50){v8}
    \pnode(2.50,1.50){v9}
    \pnode(2.50,2.30){v10}
    \pnode(1.30,0.70){v11}
    \pnode(1.90,0.70){v12}
    \pnode(3.10,0.70){v13}
    \pnode(3.70,0.70){v14}
    \pnode(1.30,0.10){v15}
    \pnode(1.90,0.10){v16}
    \pnode(3.10,0.10){v17}
    \psline[linecolor=black,linewidth=0.2pt](v2)(v1)
    \psline[linecolor=black,linewidth=0.2pt](v3)(v1)
    \psline[linecolor=black,linewidth=0.2pt](v4)(v1)
    \psline[linecolor=black,linewidth=0.2pt](v5)(v1)
    \psline[linecolor=black,linewidth=0.2pt](v6)(v2)
    \psline[linecolor=black,linewidth=0.2pt](v7)(v2)
    \psline[linecolor=black,linewidth=0.2pt](v8)(v2)
    \psline[linecolor=black,linewidth=0.2pt](v9)(v2)
    \psline[linecolor=black,linewidth=0.2pt](v10)(v1)
    \psline[linecolor=black,linewidth=0.2pt](v11)(v9)
    \psline[linecolor=black,linewidth=0.2pt](v12)(v9)
    \psline[linecolor=black,linewidth=0.2pt](v13)(v9)
    \psline[linecolor=black,linewidth=0.2pt](v14)(v9)
    \psline[linecolor=black,linewidth=0.2pt](v15)(v11)
    \psline[linecolor=black,linewidth=0.2pt](v16)(v12)
    \psline[linecolor=black,linewidth=0.2pt](v17)(v13)
    \pscircle[fillcolor=white, fillstyle=solid,linewidth=0.1pt](v1){0.09}
      \uput[28.18](v1){${v}$}
     \rput(v1){${1}$}
    \pscircle[fillcolor=white, fillstyle=solid,linewidth=0.1pt](v2){0.09}
      \uput[23.63](v2){${x}$}
     \rput(v2){${1}$}
    \pscircle[fillcolor=white, fillstyle=solid,linewidth=0.1pt](v3){0.09}
      \uput[17.65](v3){${y}$}
     \rput(v3){${1}$}
    \pscircle[fillcolor=white, fillstyle=solid,linewidth=0.1pt](v4){0.09}
      \uput[11.63](v4){${t}$}
      \rput(v4){${1}$}
    \pscircle[fillcolor=white, fillstyle=solid,linewidth=0.1pt](v5){0.09}
      \uput[9.93](v5){${w}$}
      \rput(v5){${1}$}
    \pscircle[fillcolor=white, fillstyle=solid,linewidth=0.1pt](v6){0.09}
      \uput[-14.04](v6){${x_1}$}
      \rput(v6){${2}$}
    \pscircle[fillcolor=white, fillstyle=solid,linewidth=0.1pt](v7){0.09}
      \uput[-5.71](v7){${x_2}$}
      \rput(v7){${3}$}
    \pscircle[fillcolor=white, fillstyle=solid,linewidth=0.1pt](v8){0.09}
      \uput[-2.60](v8){${x_3}$}
      \rput(v8){${3}$}
    \pscircle[fillcolor=white, fillstyle=solid,linewidth=0.1pt](v9){0.09}
      \uput[-2.05](v9){${x_4}$}
      \rput(v9){${3}$}
    \pscircle[fillcolor=white, fillstyle=solid,linewidth=0.1pt](v10){0.09}
      \uput[14.04](v10){${z}$}
      \rput(v10){${1}$}
    \pscircle[fillcolor=white, fillstyle=solid,linewidth=0.1pt](v11){0.09}
     \uput[0](v11){${x'_1}$}
      \rput(v11){${2}$}
    \pscircle[fillcolor=white, fillstyle=solid,linewidth=0.1pt](v12){0.09}
      \uput[0](v12){${x'_2}$}
     \rput(v12){${3}$}
    \pscircle[fillcolor=white, fillstyle=solid,linewidth=0.1pt](v13){0.09}
      \uput[0](v13){${x'_3}$}
      \rput(v13){${4}$}
    \pscircle[fillcolor=white, fillstyle=solid,linewidth=0.1pt](v14){0.09}
      \uput[0](v14){${x'_4}$}
      \rput(v14){${\alpha}$}
    \pscircle[fillcolor=white, fillstyle=solid,linewidth=0.1pt](v15){0.09}
      \uput[0](v15){${p}$}
      \rput(v15){${3}$}
    \pscircle[fillcolor=white, fillstyle=solid,linewidth=0.1pt](v16){0.09}
      \uput[0](v16){${q}$}
      \rput(v16){${3}$}
    \pscircle[fillcolor=white, fillstyle=solid,linewidth=0.1pt](v17){0.09}
      \uput[0](v17){${r}$}
      \rput(v17){${3}$}
  \end{pspicture}
}
\begin{document}

\begin{center}

	{\bf \Large Acyclic  colourings of graphs with  bounded degree}
	
	\ms
	\ms
	
	{ Anna Fiedorowicz, El\.zbieta Sidorowicz}\\

		\ms
		\ms
		{\it Faculty of Mathematics, Computer Science and Econometrics, \\
			University of Zielona G{\'o}ra,\\
			Z.~Szafrana 4a, 65-516 Zielona G{\'o}ra,
			Poland
		}
		
		\ms
		\ms
{\small e-mail: \it A.Fiedorowicz@wmie.uz.zgora.pl}\\		
		{\small e-mail: \it E.Sidorowicz@wmie.uz.zgora.pl}\\

\end{center}

\begin{abstract}
A $k$-colouring (not necessarily proper) of vertices of a graph is called {\it acyclic}, if for every pair of distinct colours $i$ and $j$ the subgraph induced by the edges whose endpoints have colours $i$ and $j$ is acyclic. In the paper we consider some generalised acyclic $k$-colourings, namely, we require that each colour class induces an acyclic or bounded degree graph.  
Mainly we focus on graphs with maximum degree 5. We prove that any such graph has an acyclic $5$-colouring such that each colour class induces an acyclic graph with maximum degree  at most 4. We  prove that the problem of deciding whether a graph $G$ has an acyclic 2-colouring in which each colour class induces a graph with maximum degree at most 3 is NP-complete, even for graphs with maximum degree 5. We also give a linear-time algorithm for an acyclic $t$-improper  colouring of  any graph  with maximum degree $d$ assuming that the number  of colors is large enough.

\noi {\bf Keywords:} {acyclic colouring,  bounded degree graph}

	\noi {\bf 2010 Mathematics Subject Classification:} 05C15
 \end{abstract}

\section{Introduction}

We consider only finite, simple graphs. We use standard notation. For a graph $G$, we denote its vertex set and edge set by $V(G)$ and $E(G)$, respectively. Let $v\in V(G)$. By $N_G(v)$ (or $N(v)$) we denote the set of the neighbours of $v$ in $G$. The cardinality of $N_G(v)$ is called the {\it degree} of $v$, denoted by $d_G(v)$ (or $d(v)$). The maximum and minimum vertex degrees in $G$ are denoted by $\Delta(G)$ and $\delta(G)$, respectively. For undefined concepts, we refer the reader to \cite{we01}.

\medskip
A \emph{$k$-colouring} of a graph $G$ is a mapping $c$ from the set of vertices of $G$ to the set  \begin{math}\{1,2,\ldots,k\}\end{math} of colours. We can also regard a $k$-colouring of $G$ as a partition of the set $V(G)$ into \emph{colour classes} \begin{math}V_1,V_2,\ldots,V_k\end{math}   such that each $V_i$ is the set of vertices with colour $i$.  In many situations it is desired that the subgraph induced by each set $V_i$ has a given property. For instance, requiring that each set $V_i$ is independent defines a proper $k$-colouring. If each set $V_i$ induces a graph with a given property we obtain a generalised colouring. One can also require that for any pair of distinct colours $i$ and $j$, the subgraph induced by the edges whose endpoints have colours $i$ and $j$ satisfies a given property, for instance, is acyclic. This yields to the concept of acyclic colouring. In this paper we  concentrate on such generalised acyclic colourings, in which each colour class induces an  acyclic graph or a graph with bounded degree.  Below we state precisely this notion. The terminology and notation concerning generalised colourings, which we follow here, can be found in \cite{bobr97}, while the concept of generalised acyclic colourings was introduced in \cite{boso99}. 
 
Let \begin{math}{\cal P}_1,{\cal P}_2,\ldots, {\cal P}_k\end{math} be nonempty classes of graphs closed with respect to isomorphism. 
A $k$-colouring  of a graph $G$ is called a {\begin{math}({\cal P}_1,{\cal P}_2,\ldots, {\cal P}_k)\end{math}-\emph {colouring}}  of  $G$ if for each  $i\in\{1,2,\ldots,k\}$ the subgraph induced in $G$  by the colour class $V_i$ belongs to ${\cal P}_i$. Such a colouring is called an \emph {acyclic} {\begin{math}({\cal P}_1,{\cal P}_2,\ldots,$ $ {\cal P}_k)\end{math}-\emph{colouring}} if  for every two distinct colours $i$ and $j$ the subgraph induced by the edges whose endpoints have colours $i$ and $j$ is acyclic. 
In other words,  every bichromatic cycle  in $G$ contains at least one monochromatic edge.      
Throughout this paper we use the following notation: ${\cal S}_d$ for the class of graphs with maximum degree at most $d$, and ${\cal D}_1$ for the class of acyclic graphs. For convenience, an acyclic $(\cp_1,\cp_2,\ldots,\cp_k)$-colouring, where $\cp_i=\cp$, for each $i\in\{1,2,\ldots,k\}$,  is referred to as an acyclic $\cp^{(k)}$-colouring.\\
\begin{tabular}{l}
\hline
	\tiny{Paper accepted for publication in Science China Mathematics}
\end{tabular}

With this notation, an acyclic $k$-colouring of a graph $G$ corresponds to an acyclic  $\cp^{(k)}$-colouring of $G$, where the class ${\cal P}$ is the set of all edgeless graphs. We use $\chi _a(G)$ to denote the  acyclic  chromatic number.  An acyclic   $\cp^{(k)}$-colouring of $G$ such that ${\cal P}={\cal S}_d$ is called an  \emph{acyclic $d$-improper $k$-colouring}. 

The notion of acyclic colouring of graphs was introduced in 1973 by Gr\H{u}nbaum \cite{gr73} and has been widely considered in the recent past. Even more attention has been payed to this problem since it was proved by  Coleman et al. \cite{coca86,como84} that acyclic colorings can be used in computing Hessian matrices via the substitution method, see also \cite{gema05}.

However, determining $\chi_a(G)$ is quite difficult.   Kostochka \cite{ko78} proved that it is an NP-complete problem to decide for a given arbitrary graph $G$ whether $\chi_a(G)\le 3$.  The acyclic chromatic number was determined for several families of graphs. In particular, this parameter was studied intensively for the family of graphs with fixed maximum degree. Using the probabilistic method, Alon et al. \cite{almc91} showed that any graph of maximum degree $\Delta $ can be acyclically coloured using $O(\Delta ^{4/3})$ colours. A $t$-improper analogues of this result was obtained by Addario-Berry et al. \cite{ades10}. They proved that any graph of maximum degree $\Delta $ has an acyclic $t$-improper colouring with $O(\Delta {\rm ln}\Delta +(\Delta-t)\Delta)$ colours.

Focusing on the family of graphs with  small maximum degree, it was shown in \cite{gr73} that $\chi_a(G)\le 4$ for any graph with maximum degree 3 (see also \cite{sa04}).  Burstein \cite{bu79} proved that $\chi_a(G)\le 5$ for any graph with maximum degree 4. Recently, Kostochka and Stocker \cite{kost11} proved that $\chi_a(G)\le 7$ for any graph with maximum degree 5. For graphs with maximum degree 6, Hocquard \cite{ho11} proved that 11 colours are enough for an acyclic colouring.

In 1999, Boiron et al. began with the study on the problem of acyclic \begin{math}({\cal P}_1,{\cal P}_2,\ldots,$ $ {\cal P}_k)\end{math}-colourings of outerplanar and planar graphs \cite{boso99a}, and bounded degree graphs \cite{boso99}. In particular, they proved that any graph $G\in {\cal S}_3$ has an acyclic $({\cal D}_1,{\cal S}_2)$-colouring as well as an acyclic ${\cal S}_1^{(3)}$-colouring \cite{boso99}. Addario-Berry et al. \cite{adka10} proved that each graph from ${\cal S}_3$ has an acyclic $({\cal S}_2,{\cal S}_2)$-colouring. This theorem was also proved in \cite{bofi10}, where a polynomial-time algorithm was presented. In \cite{boje11}  a polynomial-time algorithm that provides an acyclic $({\cal D}_1,{\cal LF})$-colouring of any graph from ${\cal S}_3\setminus \{K_4,K_{3,3}\}$ was given (${\cal LF}$ is the set of acyclic graphs with maximum degree at most 2). Related problems concerning the class of graphs with maximum degree at most 4 are considered in  \cite{fisi13}, where it was proved that any graph from $\cs_4$ has an acyclic $\cs_3^{(3)}$-colouring, as well as an acyclic $(\cs_3\cap \cd_1)^{(4)}$-colouring. In the present paper we continue the previous work and consider  acyclic colourings of graphs with  maximum degree at most 5.  We prove that each graph $G\in{\cal S}_5$ has an acyclic $\cd_1^{(5)}$-colouring. The number of colours in this theorem cannot be reduced, since the complete graph $K_6$ needs at least 5 colours in any such colouring. Next, we slightly improve this result, by proving  that any graph from ${\cal S}_5$ can be acyclically coloured with $5$ colours in such a way that each colour class induces an acyclic graph with maximum degree at most $4$. We also show that the problem of deciding whether a given graph $G\in {\cal S}_5$ is acyclically $({\cal S}_3,{\cal S}_3)$-colourable is NP-complete. We finish the paper with a general result   giving linear-time algorithms for acyclic $t$-improper colourings of graphs with maximum degree $d$ assuming that the number of colors is large enough with respect to $d$.

\ms

The following definitions and notation, which will be used later in the proofs, deal with a  {\it partial $k$-colouring} of a graph $G$, defined as an assignment $c$ of colours from the set $\{1,2,\ldots,k\}$ to a subset $C$ of $V(G)$.  Given a partial $k$-colouring $c$ of $G$, the set $C$ is the set of coloured vertices.
For a vertex $v$ let  $n_v=|C\cap N(v)|$ and $p_v=|\bigcup_{u\in N(v)\cap C} c(u)|$. Clearly, $p_v\le n_v$. Let $C_v$ denote the multiset of colours assigned by $c$ to the coloured neighbors of $v$. For $S\subseteq V(G)$ let $C(S)=\bigcup _{v\in S}C_v$.
A vertex $v$ is called {\it rainbow}, if all its coloured neighbours  have distinct colours. 


Let $c$ be a partial $k$-colouring of $G$ and $i,j$ be distinct colours. A bichromatic cycle (resp. path) having no monochromatic edge is called an \textit{alternating cycle} (resp. \textit{path}). An $(i,j)$-alternating cycle (path) is an alternating cycle (path) with each vertex coloured $i$ or $j$. Let $F$ be a cycle in $G$ containing $v$. Then $F$ is called $(i,j)$-{\it dangerous for $v$},  if colouring $v$ with $i$ results in  an $(i,j)$-alternating cycle. $F$ is called {\it $i$-mono-dangerous} for $v$, if colouring $v$ with $i$ results in a monochromatic cycle containing $v$. When it is convenient, all $(i,j)$-dangerous cycles and $k$-mono-dangerous cycles for $v$ will be called simply {\it dangerous} cycles for $v$.




\section{Acyclic colourings such that each colour class induces an acyclic graph}
\noi First we  show that any graph from ${\cal S}_5$ has an acyclic $\cd_1^{(5)}$-colouring. We start with the following auxiliary lemma.

\begin{lem}\label{lem_ac5}
Let $G\in\cs_5$ and $c$ be a partial acyclic $\cd_1^{(5)}$-colouring of $G$. Assume  $v$ is an uncoloured vertex. If $n_v\leq 4$, then there exists a colour for $v$ that allows us
to extend $c$.
\end{lem}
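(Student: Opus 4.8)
The plan is to show that among the five available colours at most four can be \emph{forbidden} for $v$, so that one always survives. With the terminology fixed above, a colour $\alpha$ is forbidden exactly when setting $c(v)=\alpha$ produces a dangerous cycle through $v$: either an $\alpha$-mono-dangerous cycle (a monochromatic $\alpha$-cycle) or an $(\alpha,j)$-dangerous cycle for some $j\neq\alpha$. The first observation I would record is that \emph{every} dangerous cycle for $v$ leaves $v$ through two coloured neighbours of the same colour: if it is $\alpha$-monochromatic these two neighbours are coloured $\alpha$, and if it is $(\alpha,j)$-alternating they are both coloured $j$. Hence each forbidden colour can be charged to a monochromatic pair $\{p,q\}\subseteq N(v)\cap C$ with $c(p)=c(q)$, joined by a suitable path in the already-coloured part of $G$ (a $c(p)$-path inside $G[V_{c(p)}]$ in the monochromatic case, an alternating path otherwise).

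Next I would record the elementary counting bound. Writing $m_\gamma$ for the number of neighbours of $v$ coloured $\gamma$, we have $\sum_\gamma m_\gamma=n_v\le 4$, so at most two colours satisfy $m_\gamma\ge 2$ and the number of monochromatic pairs equals $\sum_\gamma\binom{m_\gamma}{2}$. For a fixed pair $\{p,q\}$, every colour charged to it forces a neighbour of $p$ other than $v$ of that very colour, namely the first vertex of the path realising the dangerous cycle; consequently the set of colours charged to $\{p,q\}$ has size at most the number of colours appearing on $N(p)\setminus\{v\}$, which is at most $d(p)-1\le 4$. This already settles the easy configurations. If $v$ is rainbow there is no monochromatic pair, no dangerous cycle can arise, and any of the five colours extends $c$. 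If exactly one colour is repeated and only twice (the single-pair patterns), all forbidden colours are charged to one pair $\{p,q\}$, so there are at most $d(p)-1\le 4$ of them and a fifth colour is free.

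The main obstacle is the remaining distributions of $C_v$, where the four coloured neighbours yield more than one monochromatic pair: when a single colour occurs three or four times, and when two colours each occur twice. Here the crude count over pairs can reach five, so the degree bound alone is not enough and I would have to exploit that $c$ itself is a valid partial acyclic $\cd_1^{(5)}$-colouring. The plan is to argue, case by case on the (at most two) repeated colours, that realising five distinct forbidden colours simultaneously would either force some coloured anchor to have degree exceeding $5$ once one tracks the distinct first edges of the competing monochromatic and alternating paths leaving $p$ and $q$, or else would already produce an alternating cycle in $G[V_i\cup V_j]$, contradicting the acyclicity of $c$. I expect this extremal analysis, and in particular the two-doubled-colours case, to be the delicate and by far the longest part of the argument, since it is precisely there that the bound is tight and a purely local count fails.
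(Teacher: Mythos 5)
Your first two steps are sound and coincide with the paper's: the observation that every dangerous cycle leaves $v$ through two neighbours of equal colour, the trivial rainbow case, and the single-repeated-pair case handled by counting distinctly coloured neighbours of one anchor $p$ (this is exactly the paper's Case~1, where five forbidden colours would force $d(x)\geq 6$). The gap is in your plan for the remaining distributions. Your proposed resolution --- that five simultaneously forbidden colours must force some coloured anchor of degree exceeding $5$ or an alternating cycle already present in $c$ --- is \emph{false}, so no case analysis along those lines can close the argument. Concretely, take $N(v)\supseteq\{x,y,z\}$ with $c(x)=c(y)=c(z)=1$; let $x$ have further neighbours $a_2,a_3,a_4,a_5$ coloured $2,3,4,5$, with $a_2,a_3$ also adjacent to $y$ and $a_4,a_5$ also adjacent to $z$; and let $w$ be coloured $1$ and adjacent to both $y$ and $z$. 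All degrees are at most $5$, each colour class induces a forest, every bichromatic subgraph is a tree, so $c$ is a valid partial acyclic $\cd_1^{(5)}$-colouring with $n_v=3\leq 4$. Yet colour $1$ is blocked by the monochromatic cycle $v,y,w,z$, and each colour $i\in\{2,3,4,5\}$ is blocked by the alternating $4$-cycle through $x$, $a_i$ and $y$ (or $z$). All five colours are forbidden, with no degree violation and no alternating cycle in $c$ itself. Hence the statement you are actually trying to prove --- that $v$ can always be coloured \emph{greedily}, with $c$ left untouched --- is simply not true, and this is precisely where your "delicate extremal analysis" would run aground.

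What rescues the lemma, and what the paper's proof does, is \emph{recolouring}: the word "extend" is used loosely, and the authors permit modifying $c$ on already-coloured vertices before colouring $v$. Their key tool is the observation (Proposition~1 in the paper) that a rainbow vertex may be recoloured with any of the five colours; in the configuration above, $x$ is rainbow with four distinctly coloured neighbours, and recolouring it with a colour outside $C_v$ collapses the triple $\{x,y,z\}$ to a single repeated pair, i.e.\ reduces to your easy case. After this reduction the paper can assume $p_u\leq 3$ for each vertex $u$ in the repeated colour class, and only then does counting ($p_x+p_y+p_z\geq 10$ in the tripled case, and an analogous argument with a further recolouring of a neighbour $x_3$ via alternating-path analysis in the two-pairs case) yield a contradiction. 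So the missing idea is not a sharper static count but the recolouring mechanism itself; note that this is not a cosmetic point, since in the proof of Theorem~2 the authors must track exactly which vertices may be recoloured when the lemma is invoked.
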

\begin{proof}
Let $c$ be a partial acyclic $\cd_1^{(5)}$-colouring of $G$ and $v$ be an uncoloured vertex with $n_v\leq 4$. If $n_v\leq 1$, then clearly we can colour $v$. Hence, we assume $n_v\geq 2$. 
 We show that we can colour $v$. The vertex $v$ cannot be coloured with a particular colour $i$ ($i=1,\ldots,5$) only if  there is an $(i,j)$-dangerous cycle for $v$, where $j\in\{1,\ldots,5\}, j\neq i$  or if there is an $i$-mono-dangerous cycle for $v$.  It is easy to observe the following:

\begin{pro}\label{obs_rnb}
If $c$ is any partial acyclic $\cd_1^{(5)}$-colouring of $G\in\cs_5$  and  $u\in V(G)$ is rainbow, then we can colour or recolour $u$ with any of $5$ colours. 
\end{pro}
Thus, we need to consider four cases. 

\noindent{\bf\itshape{Case 1}} Assume that exactly two of the coloured neighbours of  $v$ have the same colour, say $x,y \in N(v)$ and $c(x)=c(y)=1$, the others (if exist) have distinct colours or are uncoloured.   Observe that  for each colour $i\in\{2,\ldots,5\}$ there must be an  $(i,1)$-dangerous cycle and a 1-mono-dangerous cycle for $v$, passing through $x$, since otherwise we can colour $v$. It follows $x$  is adjacent to at least five distinct coloured vertices, but this is impossible, since $d(x)\leq 5$ and $x$ is also adjacent to $v$ (which is uncoloured).

\noindent{\bf\itshape{Case 2}} Assume that exactly three of the coloured neighbours of $v$ have the same colour, say $x,y, z\in N(v)$ and $c(x)=c(y)=c(z)=1$.  If one of $x,y,z$ has four coloured neighbours and is rainbow, then   Proposition \ref{obs_rnb} yields  we can recolour this vertex with a colour $i\not\in C_v$ and obtain Case 1. It follows  $p_u\leq 3$, for $u\in\{x,y,z\}$.  Moreover, we cannot colour $v$ only if for each colour $i\in\{2,\ldots,5\}$ there is an $(i,1)$-dangerous cycle and a 1-mono-dangerous cycle for $v$. Hence each colour $i\in\{1,\ldots,5\}$ belongs to at least two different multisets among $C_x, C_y, C_z$. Thus, $p_x+p_y+p_z\geq 10$, what is impossible.

\noindent{\bf\itshape{Case 3}} If there are two pairs (say, $x,y$ and $z,w$) of neighbours of $v$ with the same colour, w.l.o.g.  $c(x)=c(y)=1$ and $c(z)=c(w)=2$, then, similarly as above, we may assume none of $x,y,z,w$ has neighbours with four different colours. Thus, $p_u\leq 3$, for $u\in\{x,y,z,w\}$. We may assume that for each $i\in\{3,4,5\}$ there is an $(i,1)$-dangerous or an $(i,2)$-dangerous cycle for $v$  and that there is also a 1-mono-dangerous and a 2-mono-dangerous cycle for $v$. Hence each colour $i\in\{1,\ldots,5\}$ belongs to at least two different multisets among $C_x, C_y$ or to at least two different multisets among $C_z, C_w$. Thus, at least one of the following occurs: $p_x=p_y=3$ or $p_z=p_w=3$. W.l.o.g., we may assume the former holds. Notice, $c(x)=c(y)$. We focus on $x$. Assume $x_1,x_2,x_3\in N(x)\setminus \{v\}$ are all in some dangerous (or mono-dangerous) cycles for $v$ and $c(x_1)=c_1,c(x_2)=c_2, c(x_3)=c_3$, $c_a\neq c_b$ for $a\neq b$.
If $n_x=3$, then we  recolour $x$ with 2 and obtain Case 2. Otherwise, we may assume that there is a neighbour $x_4$ of $x$ such that $c(x_4)=c_3$. Observe that  we cannot recolour $x$ with a colour $i\in\{2,\ldots,5\}$ only if for each $i$ there is an alternating $(c_3,i)$-path from $x_3$ to $x_4$. Hence $2,\ldots,5\in C_{x_3}$, but this is impossible, since $C_{x_3}$ already contains colour 1 twice (because $x_3$ is in the dangerous cycle for $v$).

\noindent{\bf\itshape{Case 4}}
Assume that $n_v=4$ and all coloured neighbours of $v$ have the same colour, say $x, y,$ $z, w\in N(v)$ and $c(x)=c(y)=c(z)=c(w)=1$. As above, we may assume none of $x,y,z,w$  has neighbours with four different colours. Moreover, for each $i\in\{2,\ldots,5\}$ there must be an $(i,1)$-dangerous cycle and also a 1-mono-dangerous cycle for $v$, since otherwise we can colour $v$. Hence each $i\in\{1,\ldots, 5\}$ belongs to at least two different multisets among $C_x, C_y, C_z, C_w$. Thus,  there are at least two vertices among $x,y,z,w$, say $x, y$, such that $p_x=p_y=3$. We proceed similarly, as in Case 3.
\end{proof}

\noi To prove the next theorem we adapt the method presented in \cite{ho11}. In particular, we use a notion of good spanning trees.
Let $G$ be a $d$-regular connected graph. A \textit{good spanning tree} of $G$ is defined as its spanning tree   that  contains a vertex, called {\it root}, adjacent to $d-1$ leaves. 
 
\begin{thm}[\cite{ho11}]\label{thm_tree}
Every regular connected graph admits a good spanning tree.
\end{thm}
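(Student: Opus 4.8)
The plan is to reduce the statement to a purely connectivity question and then settle that question using regularity. Fix $d\ge 2$ (the case $d=1$, i.e. $G=K_2$, is trivial, since ``adjacent to $0$ leaves'' holds vacuously) and write $N[r]=\{r\}\cup N(r)$. First observe what a good spanning tree forces on its root $r$. Since $r$ is adjacent to $d-1$ leaves, its degree in the tree is at least $d-1$; and as $d_G(r)=d$, it is either $d-1$ or $d$. If it were $d-1$, then $r$ together with its $d-1$ leaf-neighbours would already constitute an entire connected component of the tree, forcing $|V(G)|=d$, which is impossible for a $d$-regular graph. Hence the root uses all $d$ of its edges: exactly one neighbour $w$ serves as a \emph{gateway} to the rest of the graph, while the remaining $d-1$ neighbours are leaves. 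Consequently, a good spanning tree with root $r$ exists if and only if there is a neighbour $w\in N(r)$ for which $G[(V(G)\setminus N[r])\cup\{w\}]$ is connected; equivalently, $w$ has a neighbour in every connected component of $G-N[r]$.

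Given such a pair $(r,w)$, the tree is built at once: take any spanning tree of the connected graph $G[(V(G)\setminus N[r])\cup\{w\}]$, add the edge $rw$, and attach each of the remaining $d-1$ neighbours of $r$ to $r$ by a single edge. This is a spanning tree of $G$ in which $r$ is adjacent to the $d-1$ leaves $N(r)\setminus\{w\}$, as required. Conversely, deleting $r$ and its $d-1$ leaf-neighbours from a good spanning tree leaves the gateway's subtree, which spans exactly $(V(G)\setminus N[r])\cup\{w\}$ and is connected. So everything reduces to producing a single vertex $r$ and a single neighbour $w$ with the connectivity property above.

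To find $r$ and $w$ I would take $r$ to be an endpoint of a longest path $P=v_0v_1\cdots v_k$, say $r=v_0$. Maximality of $P$ guarantees that every neighbour of $v_0$ lies on $P$, and moreover that $G-v_0$ is connected (each component of $G-v_0$ contains a neighbour of $v_0$, and all such neighbours lie on the connected subpath $v_1\cdots v_k$, so there is only one component). Let $v_j$ be the neighbour of $v_0$ of largest index and set $w=v_j$. Then the tail $v_{j+1}\cdots v_k$ is an intact connected part of $G-N[v_0]$ that $w$ meets, and the remaining task is to show that every other component of $G-N[v_0]$ is also met by $v_j$ — or, failing that, to repeat the choice at a different endpoint obtained by a P\'osa rotation along the cycle $v_0v_1\cdots v_jv_0$. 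It is here that $d$-regularity is decisive: a single neighbour of $r$ has only $d-1$ further edges with which to reach the components of $G-N[r]$, and the rotation structure of longest paths restricts how those components can attach to $N(r)$.

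The main obstacle is exactly this final existence step, and it genuinely requires regularity rather than mere connectivity: for general connected (indeed even $2$-connected) graphs one can build a symmetric configuration in which, for every choice of the kept neighbour $w$, some vertex is stranded, so no gateway exists. Thus the heart of the argument is a careful, likely case-based, analysis balancing the number of components of $G-N[r]$ against the $d-1$ outgoing edges available at one neighbour, so as to guarantee that some endpoint of a longest path (or, alternatively, some deepest leaf of a DFS tree, all of whose neighbours are ancestors) admits a gateway neighbour touching every component. Once such $(r,w)$ is secured, the construction of the first two paragraphs completes the proof.
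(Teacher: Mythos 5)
The reduction in your first two paragraphs is correct and cleanly done: a good spanning tree rooted at $r$ exists if and only if some neighbour $w$ of $r$ has a neighbour in every component of $G-N[r]$ (your degree count ruling out tree-degree $d-1$ for the root is right, and the degenerate case $V(G)=N[r]$ is covered). But this equivalence is the routine part of the theorem; its entire content is the \emph{existence} of such a pair $(r,w)$, and your proposal does not establish it. You choose $r=v_0$ as an endpoint of a longest path and $w=v_j$ as its furthest neighbour along the path, correctly observe that $G-v_0$ is connected and that the tail $v_{j+1}\cdots v_k$ lies in a component of $G-N[v_0]$ met by $w$, and then explicitly concede that showing every other component is also met by $w$ --- ``or, failing that,'' rotating to a different endpoint --- is ``the main obstacle'' requiring ``a careful, likely case-based, analysis'' which you do not carry out. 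As written, nothing excludes a component of $G-N[v_0]$ attached only to low-index neighbours of $v_0$, and no P\'osa-rotation argument is actually performed; this is exactly the step where $d$-regularity must enter, so the omission is a genuine gap rather than a detail to be filled in mechanically.

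For comparison: the paper itself does not prove Theorem \ref{thm_tree} at all --- it imports the statement from Hocquard \cite{ho11}, and what the cited proof supplies is precisely the existence step you postponed. So your submission amounts to a correct but easy reformulation plus a hole where the theorem lives. To complete it you would either have to carry out the rotation analysis in full, or switch to an extremal spanning-tree argument: for instance, start from a spanning tree in which some vertex $u$ has full tree-degree $d$ (a BFS tree rooted at $u$ provides one), choose such a pair minimizing the number of non-leaf children of $u$, and show via an edge exchange --- using regularity to guarantee an edge leaving the offending subtree --- that two internal children can be reduced to one. Either way, the case analysis you waved at is the proof, and it is missing.
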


\begin{thm}\label{acyclic_5}
Every graph $G\in\cs_5$ has an acyclic $\cd_1^{(5)}$-colouring.
\end{thm}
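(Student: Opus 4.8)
The plan is to reduce Theorem \ref{acyclic_5} to Lemma \ref{lem_ac5} by colouring the vertices of $G$ one at a time in a carefully chosen order, so that when each vertex $v$ is coloured, at most four of its neighbours are already coloured, i.e.\ $n_v\le 4$. Lemma \ref{lem_ac5} then guarantees that a colour extending the partial acyclic $\cd_1^{(5)}$-colouring always exists, and we never get stuck. Since it suffices to colour each connected component separately, and since we may add edges to make the problem only harder, I would first reduce to the case where $G$ is connected; if $\Delta(G)<5$ the ordinary greedy bound already gives far more room, so the interesting case is when $G$ is connected with vertices of degree $5$.

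To produce such an ordering I would invoke the good spanning tree machinery of Theorem \ref{thm_tree}. First embed $G$ into a $5$-regular connected graph $H$ (add vertices and edges to raise every degree to $5$ while keeping the graph connected and simple); an acyclic $\cd_1^{(5)}$-colouring of $H$ restricts to one of $G$, so it is enough to colour $H$. By Theorem \ref{thm_tree}, $H$ has a good spanning tree $T$ rooted at a vertex $\rho$ that is adjacent in $T$ to $d-1=4$ leaves. I would then order $V(H)$ by a reverse breadth-first (or reverse DFS) traversal of $T$: colour the vertices from the leaves upward toward $\rho$, so that each non-root vertex $v$ is coloured \emph{before} its parent in $T$. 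Under this order, at the moment $v$ is coloured, the only neighbour of $v$ that is guaranteed already coloured through the tree is the set of its $T$-children, and its $T$-parent is still uncoloured. The key point is to argue that the parent edge keeps $n_v\le 4$: since $v$ has degree at most $5$ in $H$ and its parent edge leads to a still-uncoloured vertex, at most $4$ neighbours of $v$ are coloured when it is its turn. The $4$ leaves hanging off the root are coloured first and have $n_v\le 1$, and the root $\rho$ is coloured last, at which point exactly its $4$ leaf-neighbours are coloured (its fifth incident edge goes to a non-leaf tree-child, but we arrange the traversal so that one neighbour of $\rho$ remains to absorb the slack); thus $n_\rho\le 4$ as well.

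The main obstacle is handling the root and any vertex all of whose neighbours precede it in the ordering, since a naive leaf-to-root order could leave some internal vertex with all five neighbours already coloured. This is exactly what the good spanning tree is for: the structural guarantee that the root has $4$ leaf-neighbours lets us arrange the colouring order so that, for every vertex, at least one incident edge points to a vertex coloured later, keeping $n_v\le 4$ throughout; the four leaves at the root provide the needed degree of freedom so that the last vertex coloured still sees only four coloured neighbours. I would verify this invariant carefully for the root and its neighbourhood, and then the theorem follows: apply Lemma \ref{lem_ac5} at each step to extend the colouring, terminating with a full acyclic $\cd_1^{(5)}$-colouring of $H$, hence of $G$. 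Finally I would remark that $5$ colours are necessary, as $K_6\in\cs_5$ admits no acyclic $\cd_1^{(4)}$-colouring.
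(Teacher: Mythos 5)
Your overall plan (handle the non-regular case by degeneracy and the $5$-regular case via a good spanning tree) matches the paper's, but it founders on a genuine impossibility at the root. In a $5$-regular graph no ordering whatsoever can maintain your invariant $n_v\le 4$ for \emph{every} vertex: whichever vertex is coloured last has all five of its neighbours already coloured, so $n=5$ there and Lemma \ref{lem_ac5} simply does not apply; equivalently, a $5$-regular graph is not $4$-degenerate. Your attempt to evade this (``we arrange the traversal so that one neighbour of $\rho$ remains to absorb the slack'') is incoherent --- if some neighbour of $\rho$ is coloured after $\rho$, that vertex (or a later one) becomes the last vertex and the same problem reappears for it. The actual role of the good spanning tree in the paper is different from the one you assign it: the four leaves $v_1,\dots,v_4$ adjacent to the root $v_n$ are coloured \emph{first}, with four \emph{distinct} colours, and are protected from recolouring, so that when $v_n$ is finally reached with $n_{v_n}=5$ its neighbourhood is either rainbow (Proposition \ref{obs_rnb} finishes) or contains exactly one repeated colour, namely the fifth neighbour $u$ duplicating some $v_j$. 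That residual configuration is then killed by a direct argument outside Lemma \ref{lem_ac5}: blocking all five colours at $v_n$ would force, for each $\alpha\in\{2,\dots,5\}$, an $(\alpha,1)$-dangerous cycle and a $1$-mono-dangerous cycle through \emph{both} $v_1$ and $u$, giving $p_{v_1}\ge 5$ and $p_u\ge 5$, which is impossible since each of these degree-$5$ vertices is adjacent to the uncoloured $v_n$.

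A second gap: you treat Lemma \ref{lem_ac5} as a black box that only colours the current vertex, but its proof \emph{recolours} already-coloured vertices (a rainbow vertex with four coloured neighbours in Cases 2--4, and possibly a further neighbour $x$ in Cases 3--4). Any corrected proof along the paper's lines must verify that $v_1,\dots,v_4$ are never recoloured: the paper checks that any $v_j$ adjacent to the current vertex $v_i$ has two uncoloured neighbours ($v_i$ and $v_n$), hence at most three coloured ones, so it is never a candidate for the first kind of recolouring, and that in Cases 3--4 there is always a same-coloured substitute $y\notin\{v_1,\dots,v_4\}$ that can be recoloured instead of $x$. Finally, your preliminary embedding of $G$ into a $5$-regular supergraph is workable but unnecessary, and your dichotomy is slightly off: the right split is not $\Delta(G)<5$ versus $5$-regular, but $\delta(G)\le 4$ versus $5$-regular --- a connected graph in $\cs_5$ that is not $5$-regular is $4$-degenerate, so Lemma \ref{lem_ac5} alone finishes that case, which is how the paper proceeds.
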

\begin{proof}
Let $G\in\cs_5$.  Clearly, we can assume $G$ is connected. If $\delta(G)\leq 4$, then obviously $G$ is 4-degenerate. Thus,  Lemma \ref{lem_ac5} yields there is an acyclic $\cd_1^{(5)}$-colouring of $G$.

Now we may assume that $G$ is 5-regular. Let $T$ be a good spanning tree of $G$, rooted at $v_n$, where $n$ is the order of $G$ (the existence of such a tree follows from Theorem \ref{thm_tree}).
Let $N(v_n)=\{v_1,v_2,v_3,v_4,u\}$, where $v_1,v_2,v_3$ and $v_4$ are leaves in $T$. We order the vertices of $G$, from $v_5$ to $v_n$, according to the post order walk of $T$. 
We construct an acyclic $\cd_1^{(5)}$-colouring as follows. First, we colour the vertices $v_1,\ldots, v_4$ with four different colours. Then we  successively colour 
vertices $v_5$ up to $v_{n-1}=u$, using Lemma \ref{lem_ac5}, but we will never recolour the vertices $v_1,\ldots, v_4$. Now let us check that this is possible. Assume that we are going to colour $v_i$, where $i\in\{5,\ldots,n-1\}$. If $v_i$ is rainbow or has at most one coloured neighbour, then clearly we can colour it. Otherwise, one of the Cases 1--4, considered in Lemma \ref{lem_ac5}, occurs. In Case 1 there is no recolouring.  In Cases 2--4  it may happen, that we need to recolour a  rainbow vertex, say $w$, from the neighbourhood of $v_i$. However, we do it only if $w$ has four coloured neighbours. We claim that $w\not\in\{v_1,\ldots,v_4\}$. Indeed, if $v_j$ is adjacent to $v_i$, for some $j\in\{1,\ldots,4\}$, then $v_j$ has at most three coloured neighbours ($d(v_j)=5$ and $v_j$ has at least two uncoloured neighbours, namely $v_i$ and $v_n$).  In Cases 3 and 4 it is  also possible that we  recolour another neighbour of $v_i$, say $x$. But in this case there is always another neighbour of $v_i$, say $y$, with the same colour as $x$ such that we can recolour $y$ instead of $x$.  Hence if $x\in \{v_1,\ldots,v_4\}$, then we recolour $y$. Clearly, if $x$ is one of $v_1,\ldots,v_4$, then $y$ is not,  because $x$ and $y$ have the same colour.

Now let $c$ be the obtained partial colouring of $G$,  with $v_n$ being the only one  uncoloured vertex. If $v_n$ is rainbow, then Proposition \ref{obs_rnb}
yields we can colour $v_n$.
Otherwise, assume that there are two neighbours of $v_n$ with the same colour, say $c(v_1)=c(u)=1$,  all  other neighbours have distinct colours.  We cannot colour $v_n$ only if for each colour $\alpha\in\{2,\ldots,5\}$ there is an $(\alpha,1)$-dangerous cycle and there is a 1-mono-dangerous cycle for $v_n$. Each such cycle passes through both $v_1$ and $u$. Hence $p_{v_1}\geq 5$ and $p_{u}\geq 5$, but this is impossible, because $d(v_1)=d(u)=5$.
\end{proof}
 
\section{Acyclic colourings such that each colour class induces an acyclic graph with bounded degree}
  \noi Since $K_6$ needs 5 colours in any acyclic colouring in which each colour class induces an acyclic graph, we cannot reduce the number of colours in Theorem \ref{acyclic_5}. Nevertheless, we can improve Theorem \ref{acyclic_5} in the following way.

\begin{thm}\label{thm_s5}
Every graph $G\in\cs_5$ has an acyclic $({\cal S}_4\cap {\cal D}_1)^{(5)}$-colouring.
\end{thm}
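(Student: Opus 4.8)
The plan is to copy the architecture of the proof of Theorem~\ref{acyclic_5} and reduce everything to a strengthened version of Lemma~\ref{lem_ac5}, namely: if $G\in\cs_5$, if $c$ is a partial acyclic $(\cs_4\cap\cd_1)^{(5)}$-colouring and $v$ is an uncoloured vertex with $n_v\le 4$, then $c$ can be extended to $v$, after possibly recolouring some already coloured vertices. Granting this lemma, the theorem follows exactly as before: if $\delta(G)\le 4$ then $G$ is $4$-degenerate and we colour it greedily in a degeneracy order; if $G$ is $5$-regular we take a good spanning tree $T$ rooted at $v_n$ (Theorem~\ref{thm_tree}), colour its four leaves $v_1,\dots,v_4$ with distinct colours, colour $v_5,\dots,v_{n-1}$ in post-order (each such vertex has its parent still uncoloured, hence $n_{v_i}\le 4$, so the lemma applies), and finally colour the root $v_n$ by a separate argument.

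The single new phenomenon, compared with Lemma~\ref{lem_ac5}, is an extra way to forbid a colour. Colouring $v$ with $i$ is now also impossible if some neighbour $u$ with $c(u)=i$ is already \emph{$i$-saturated}, i.e.\ has four neighbours of colour $i$: the assignment would give $u$ a fifth neighbour of colour $i$ and violate $\cs_4$. Here the key structural fact to exploit is that such a $u$ must have $d(u)=5$ with every neighbour except $v$ coloured $i$; thus $p_u=1$ and $u$ is as far from rainbow as possible, so $u$ can be recoloured with any colour $i'\ne i$ unless there is an $(i',i)$-dangerous cycle for $u$ through two of its colour-$i$ neighbours. I would also record the adapted form of Proposition~\ref{obs_rnb}: a rainbow $v$ is still colourable, because assigning to $v$ a colour absent from its neighbourhood creates no mono-dangerous cycle, no dangerous cycle and no saturation conflict, and such a colour exists since $n_v\le 4$.

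With these two remarks I would re-run Cases 1--4 of Lemma~\ref{lem_ac5}. The degree-counting contradictions survive unchanged for every colour blocked by a genuine dangerous or mono-dangerous cycle; saturation only opens escape routes of one shape, in which a repeated colour, say $1$, is blocked by an $i$-saturated neighbour $x$ while the colours $2,\dots,5$ are blocked by dangerous cycles routed through the \emph{other} colour-$1$ neighbour $y$, a configuration carrying no degree contradiction. In that situation I would recolour the saturated vertex $x$. Since $x$ sees only colour $1$, this fails only when $(i',1)$-dangerous cycles through $x$ exist for all four $i'\ne 1$, which forces the colour-$1$ neighbours of $x$ to carry many distinctly coloured neighbours; following these forced neighbourhoods down the tree yields either a vertex of degree greater than $5$ or a deeper rainbow or saturated vertex that can itself be recoloured, so the cascade terminates. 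These are exactly the nested configurations that have to be drawn out and checked by hand.

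The final step, colouring the root $v_n$ when all five of its neighbours are already coloured, goes as in Theorem~\ref{acyclic_5}: if $v_n$ is rainbow it is colourable, otherwise two neighbours share a colour and each dangerous or mono-dangerous cycle forces $p\ge 5$ on two degree-$5$ vertices, a contradiction; the only addition is that a neighbour blocking a colour by saturation is again a degree-$5$ vertex all of whose other neighbours share one colour, and is recoloured as above. The main obstacle throughout is precisely this: saturation forbids a colour ``for free'', without spending the degree budget that powers the counting in Lemma~\ref{lem_ac5}, so the clean contradictions must be replaced by explicit, possibly cascading, recolourings of saturated vertices, and proving that these cascades always terminate within the degree bound is the crux of the argument.
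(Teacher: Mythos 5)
Your proposal is not a proof: its load-bearing step is explicitly deferred. Everything up to the cascade is fine --- the reduction to a strengthened Lemma~1, the observation that the only new obstruction is an $i$-saturated neighbour (four neighbours already coloured $i$), and the remark that such a vertex $u$ has $p_u=1$ and so is blocked only by $(i',i)$-dangerous cycles through two of its same-coloured neighbours. But at that point the degree-counting contradictions of Lemma~1 genuinely disappear (as you yourself note, saturation forbids a colour ``for free''): four colours occupying eight of the sixteen available neighbour-slots around the four colour-$i$ neighbours yields no contradiction, so you must recolour deeper, and deeper again. Your claim that ``following these forced neighbourhoods down the tree yields either a vertex of degree greater than $5$ or a deeper rainbow or saturated vertex that can itself be recoloured, so the cascade terminates'' is asserted with no termination measure and no case analysis; recolourings can also reintroduce alternating cycles or saturation elsewhere, and nothing in your sketch rules out a loop. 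This is exactly where the combinatorial work lives --- in the paper the analogous configurations consume six Claims, two cases with subcases, and three figures --- so ``drawn out and checked by hand'' is the proof, not a remark one may postpone.

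The paper also takes a structurally different and cheaper route, which avoids your invariant-maintenance problem entirely: it does not redo the incremental construction. It starts from a \emph{completed} acyclic $\cd_1^{(5)}$-colouring supplied by Theorem~2, chooses among all such colourings one minimizing the number of vertices having five neighbours in their own colour, and shows that any offending vertex $v$ (with $C_v=\{1,1,1,1,1\}$) admits a local recolouring of $v$, a neighbour $x$, or a vertex at distance two, strictly decreasing that count --- contradiction with minimality. The extremal choice is the termination measure your sketch lacks, and working with a total colouring lets the argument exploit fixed structure (a neighbour $x$ lying on two dangerous cycles, the multiset $C_x$ restricted by Claims~1--2 to two shapes, etc.) instead of an invariant that must survive every step of a greedy pass. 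If you want to salvage your approach, you would need to either supply a potential function for the cascades or convert your recolouring step into exactly this kind of minimal-counterexample argument --- at which point you have reproduced the paper's proof.
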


\begin{proof}
Let $G\in {\cal S}_5$.  Theorem \ref{acyclic_5} implies that there is an acyclic $\cd_1^{(5)}$-colouring of $G$. We choose such a colouring $c$ with the smallest possible number of vertices that have 5 neighbours with its colour. Let $v$ be a vertex that has 5 neighbours coloured with $c(v)$. We  show that we can recolour $v$ or a neighbour of $v$ (sometimes we must recolour some other vertices first) in such a way that the obtained colouring is an acyclic $\cd_1^{(5)}$-colouring of $G$ with smaller number of vertices having 5 neighbours coloured with its colour. It is easily  seen that it is enough to prove the theorem for 5-regular graphs. Thus, let $G$ be 5-regular.

Assume that $c(v)=1$ and $C_v=\{1,1,1,1,1\}$. Let $N(v)=\{x,y,z,w,t\}$.  A coloured vertex $u$ is called {\it $4$-saturated}, if it has exactly $4$ neighbours coloured with $c(u)$. First observe the following:


 
\begin{claim1}\label{claim1}
Suppose that there is a neighbour of $v$, say $x$, such that each vertex of $N(x)\setminus \{v\}$ is coloured with distinct colour  and $x$ has a neighbour, say $x_1$, such that $c(x_1)\neq 1$ and $x_1$ is not $4$-saturated.  Then we can recolour $x$ with $c(x_1)$.
\end{claim1}

\noi Observe that if we  cannot recolour $v$, then for any colour $i\in \{2,\ldots, 5\}$ there is an $(i,1)$-dangerous cycle for $v$. Thus, each colour $i\in\{2,\ldots,5\}$ must be contained in at least two different multisets among $C_x,C_y,C_z,C_w,C_t$. It follows that there is a neighbour of $v$, say $x$, which belongs to at least two  dangerous cycles for $v$. We will focus on this vertex and consider all possible assignments of colours to its neighbours.   Let $N(x)=\{v,x_1,x_2,x_3,x_4\}$. We may assume w.l.o.g. that $c(x_1)=2, c(x_2)=3$ and both $x_1$ and $x_2$ belong to  some dangerous cycles for $v$. Thus we have the following

\begin{obs}\label{obs1}
In both $C_{x_1}$ and $C_{x_2}$ colour $1$ occurs at least twice; neither $x_1$ nor $x_2$ is $4$-saturated.
\end{obs}

\begin{claim1}\label{claim2}
Suppose that   in $C(N(x)\setminus \{v\})$ exactly one colour  occurs twice. Then we can recolour $x$.
\end{claim1}

\begin{proof}
Let $c(x_3)=a, c(x_4)=b$. Since exactly one colour in $C(N(x)\setminus \{v\})$ occurs twice, there are (up to symmetry) three cases to consider.

\ms
 
\noindent{\bf\textit{Case 1}} Assume that $a=b=1$.

\noindent If we cannot recolour $x$ with any colour $i$, for $i\in\{2,3,4,5\}$, then for each $i\in\{2,3,4,5\}$ there is an $(i,1)$-dangerous cycle for $x$, passing through both $x_3$ and $x_4$. Thus, $C_{x_3}=\{1,\ldots,5\}$. Further, none of the neighbours of $x_3$ is 4-saturated. Hence, we can recolour $x_3$ with 4. Thus, by Claim \ref{claim1} we can recolour $x$.

\noindent{\bf\textit{Case 2}} Suppose that $a\in\{1,4,5\}$ and $b\in\{2,3\}$.

\noindent For convenience, we may assume that $b=2$ and $a\neq 5$. If we cannot recolour $x$ neither with 3 nor with 5, then there is an $(i,2)$-dangerous cycle for $x$, passing through both $x_1$ and $x_4$, for each  $i\in\{3,5\}$. Thus, $x_4$ is not 4-saturated. Next, if we cannot recolour $x$ with 2, then there must be a 2-mono-dangerous cycle for $x$, passing through both $x_1$ and $x_4$. It follows that $C_{x_1}=\{1,1,2,3,5\}$. We may recolour $x_1$ with 5 (since the neighbour of $x_1$ coloured with 5 is not 4-saturated). Claim \ref{claim1} implies we can recolour $x$.  

\noindent{\bf\textit{Case 3}} Let $b=a$ and $a\in\{4,5\}$.

\noindent Assume w.l.o.g. that $a=4$. If we cannot recolour $x$ with any colour $i$, for $i\in\{2,3,5\}$, then for each $i\in\{2,3,5\}$ there is an $(i,4)$-dangerous cycle for $x$, passing through both $x_3$ and $x_4$. Thus, neither $x_3$ nor $x_4$ is 4-saturated. It follows that if we cannot recolour $x$ with 4, then there must be a 4-mono-dangerous cycle for $x$, passing through both $x_3$ and $x_4$. Hence $C_{x_3}=\{1,\ldots,5\}$. We may recolour $x_3$ with 5 (since the neighbour of $x_3$ coloured with 5 is not 4-saturated). From Claim \ref{claim1} it follows that we can recolour $x$.
\end{proof}

\noindent Claims \ref{claim1} and \ref{claim2} imply  that we have to consider only two cases:  both colours $2,3$ occur twice in $C_x$ and one of colours from $\{2,3\}$ occurs three times.

\ms
 
\noindent{\bf\textit{Case 1}} Assume  that $c(x_1)=2, c(x_2)=3,c(x_3)=2, c(x_4)=3$. 

\noindent Observe that:
\begin{claim1}\label{claim3}
If there is a neighbour $x_i$ of $x$  such that in $C(N(x_i)\setminus \{x\})$  each colour occurs exactly once and at most one vertex of $N(x_i)\setminus \{x\}$ is both $4$-saturated and coloured with either $4$ or $5$, then we can recolour $x$.
\end{claim1}

\begin{proof}
First, we recolor $x_i$ either with 4 or 5. Since at most one vertex of $N(x_i)\setminus \{x\}$ coloured with 4 or 5 is 4-saturated and $4,5\notin C_x$, such a recolouring is possible. Thus, we obtain the colouring that satisfies the assertions of Claim \ref{claim2} and hence we can recolour $x$.
\end{proof}

\begin{claim1}\label{claim4}
If $x$ has a neighbour $x_i$ such that  $x_i$ does not have a $4$-saturated neighbour coloured with $1$ and  there is neither a $1$-mono-dangerous cycle  nor a $(1,j)$-dangerous cycle for $x_i$ $($for each $j\in\{2,3,4,5\})$ passing through two vertices of $N(x_i)\setminus \{x\}$, then we can recolour $x$.
\end{claim1}

\begin{proof}
Suppose to the contrary that there is a neighbour $x_i$ of $x$ such that none of the coloured 1 neighbours of $x_i$ is 4-saturated and there is neither a 1-mono-dangerous cycle  nor a $(1,j)$-dangerous cycle for $x_i$ passing through two vertices of $N(x_i)\setminus \{x\}$ and we cannot recolour $x$. If we can recolour $x_i$ with 1, then the colouring satisfies the assertions of Claim \ref{claim2} and hence we can recolour $x$, a contradiction. Otherwise, there is a 1-mono-dangerous cycle for  $x_i$ passing through $x$. For convenience and w.l.o.g. we may assume  that $c(x_i)=2$. (Thus, $i\in\{1,3\}$.) If there is no $(a,3)$-dangerous cycle for $x$ ($a\in\{4,5\}$), then we may recolour $x_i$ with 1 and $x$ with $a$. Thus, we  assume that both such dangerous cycles are present, and are passing through $x_2$ and $x_4$. Hence, $x_4$ is not 4-saturated. (Recall that Observation \ref{obs1} yields that also $x_2$ is not 4-saturated.) It follows that we cannot recolour $x$ with 3 only if there is a 3-mono-dangerous cycle for $x$, passing through $x_2$ and $x_4$. From the above and Observation \ref{obs1} we clearly obtain  $C_{x_2}=\{1,1,3,4,5\}$. Therefore $x_2$ satisfies the assertions of Claim \ref{claim3}. This implies that we can recolour $x$.
\end{proof}

\noindent Observe that if we can recolour $x$ with 4 or 5, then we are done. This is impossible only if there are at least two dangerous cycles for $x$, namely a $(4,i)$-dangerous cycle and a $(5,j)$-dangerous cycle, where $i,j \in\{2,3\}$. Up to symmetry, there are two possibilities.

\sm
\begin{figure}[htbp]
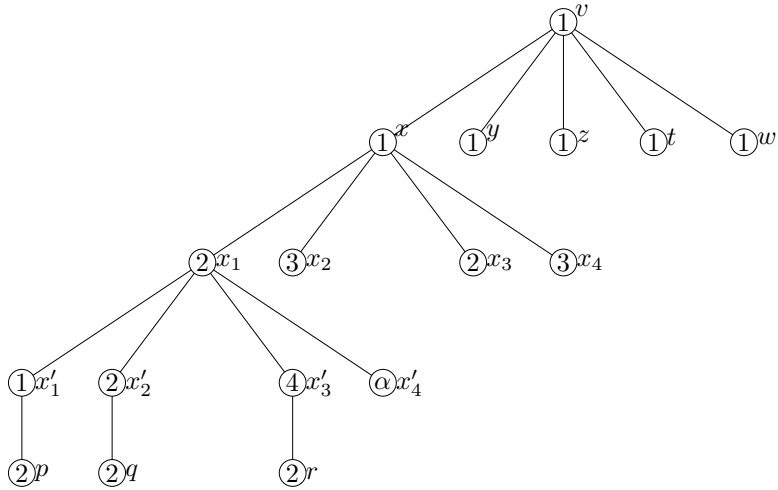

\centering

\grafGR

	\caption{Subcase 1.1. Since there is a $(2,1)$-dangerous cycle for $v$, $c(p)=2$. Since there is a $2$-mono-dangerous cycle for $x$, $c(q)=2$. Since there is a $(4,2)$-dangerous cycle for $x$, $c(r)=2$.  }

\end{figure}

\noindent{\bf\textit{Subcase 1.1}} Assume that both a $(4,2)$-dan\-ger\-ous cycle and a $(5,3)$-dangerous cycle for $x$ are present. It follows $4\in C_{x_1}$, $4\in C_{x_3}$ and $5\in C_{x_2}$, $5\in C_{x_4}$. Next, we cannot recolour $x$ with 2  only if there is a $(2,3)$-dangerous cycle for $x$, passing through $x_2$ and $x_4$, or there is a 2-mono-dangerous cycle for $x$, passing through $x_1$ and $x_3$. The same argument can be applied if we consider recolouring $x$ with 3. 
Assume first that  we have for $x$ both a 2-mono-dangerous cycle and a $(3,2)$-dangerous cycle, both passing through $x_1$. Recall that Observation \ref{obs1} yields that colour 1 occurs twice in  $C_{x_1}$. Thus $C_{x_1}=\{1, 1,2, 3,4\}$ and  by Claim \ref{claim3} we can recolour $x$, so we are done. We can proceed in this way also in the case when we have for $x$ both a 3-mono-dangerous cycle and a $(2,3)$-dangerous cycle (both passing through $x_2$). 
Hence we have two possibilities left. There are either both a 2-mono and a 3-mono-dangerous  cycle for $x$, or both a $(2,3)$-dangerous  and a $(3,2)$-dangerous cycle for $x$. We consider only the first situation, since the second one can be solved analogously. 
It follows $C_{x_1}=\{1,1,2,4,\alpha\}$ and $C_{x_2}=\{1,1,3,5,\beta\}$.  We focus on the vertex $x_1$ (see Figure 1). Let $N(x_1)\setminus \{x\}=\{x'_1,x'_2,x'_3,x'_4\}$ and $c(x'_1)=1, c(x'_2)=2,c(x'_3)=4, c(x'_4)=\alpha $. By Claim \ref{claim3} we may assume that $\alpha \in\{1,2,4\}$.  Claim \ref{claim4} implies that there is a $(1,\alpha)$-dangerous or 1-mono-dangerous cycle for $x_1$, passing through two vertices of $\{x'_1,x'_2,x'_3,x'_4\}$ (these two vertices are coloured with $\alpha $) or $\alpha=1$ and $x'_4$ is 4-saturated. In the former case, if recolouring $x_1$ with $4,5$ is impossible, then there are two dangerous cycles for $x_1$ passing through two vertices of $\{x'_1,x'_2,x'_3,x'_4\}$ coloured with $\alpha$. Let $x'\in \{x'_1,x'_2,x'_3\}$ be the vertex with colour $\alpha$. Thus, $C_{x'}=\{2,1,2,4,5\}$. We can recolour $x'$ with 5 (since the neighbour of $x'$ coloured with 5 is not 4-saturated) and hence by Claim \ref{claim3} we can recolour $x$. In the latter case, when $\alpha=1$ and $x'_4$ is 4-saturated, it is easy to observe that we can recolour $x_1$ with 4 and then  Claim \ref{claim2} implies we can recolour $x$.

\begin{figure}[htbp]
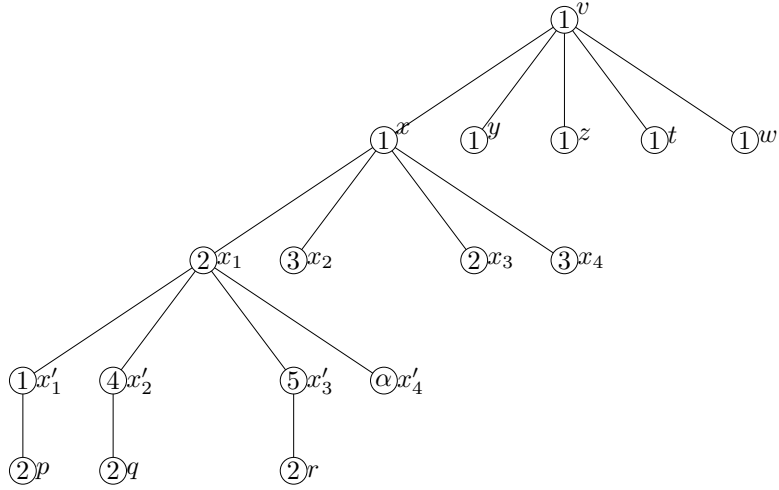

\centering

\grafGS

	\caption{Subcase 1.2. Since there is a $(2,1)$-dangerous cycle for $v$, $c(p)=2$. Since there is a a $(4,2)$-dangerous cycle for $x$, $c(q)=2$. Since there is a $(5,2)$-dangerous cycle for $x$, $c(r)=2$. }

\end{figure}

\noindent{\bf\textit{Subcase 1.2}} Suppose that there are both a  $(4,2)$-dangerous cycle and a $(5,2)$-dangerous cycle for $x$. It follows $C_{x_1}=\{1,1,4,5,\alpha\}$ (see Figure 2). Let $N(x_1)\setminus \{x\}=\{x'_1,x'_2,x'_3,x'_4\}$ and $c(x'_1)=1, c(x'_2)=2,c(x'_3)=4, c(x'_4)=\alpha $. By Claim \ref{claim3} we may assume that $\alpha \in\{1,2,4\}$. If recolouring $x_1$ with $1,4,5$ is impossible, then there are three dangerous cycles for $x_1$ passing through two vertices of $\{x'_1,x'_2,x'_3,x'_4\}$  coloured with $\alpha $ (Claim \ref{claim4} implies that the $(1,\alpha)$-dangerous or 1-mono-dangerous cycle exists) or $\alpha=1$ and $x'_4$ is 4-saturated. We start with considering the first situation.  Let $x'\in \{x'_1,x'_2,x'_3\}$ be the vertex with colour $\alpha$. Thus, $C_{x'}=\{2,1,2,4,5\}$. We can recolour $x'$ with 3 and hence by Claim \ref{claim3} we can recolour $x$. In the second case, the fact that $x'_4$ is 4-saturated implies we can recolour $x_1$ with 4 and, by Claim \ref{claim2}, we can recolour $x$.

\medskip
\noindent{\bf\textit{Case 2}} Assume  that $c(x_1)=2, c(x_2)=3,c(x_3)=3, c(x_4)=3$. 

\noindent Similarly as Claims 3 and 4 we can prove the following:

\begin{claim1}\label{claim5}
If there is a neighbour $x_i\;(i\in\{2,3 ,4\})$ of $x$  such that in $C(N(x_i)\setminus \{x\})$  each colour occurs exactly once and at most one vertex of $N(x_i)\setminus \{x\}$ is both $4$-saturated and coloured with either $4$ or $5$, then we can recolour $x$.
\end{claim1}

\begin{claim1}\label{claim6}
If $x$ has a neighbour $x_i\;(i\in\{2,3 ,4\})$ such that  $x_i$ does not have a $4$-saturated neighbour coloured with $1$ and  there is neither a $1$-mono-dangerous cycle  nor a $(1,j)$-dangerous cycle for $x_i$ passing through two vertices of $N(x_i)\setminus \{x\}$, then we can recolour $x$.
\end{claim1}

\begin{figure}[htbp]
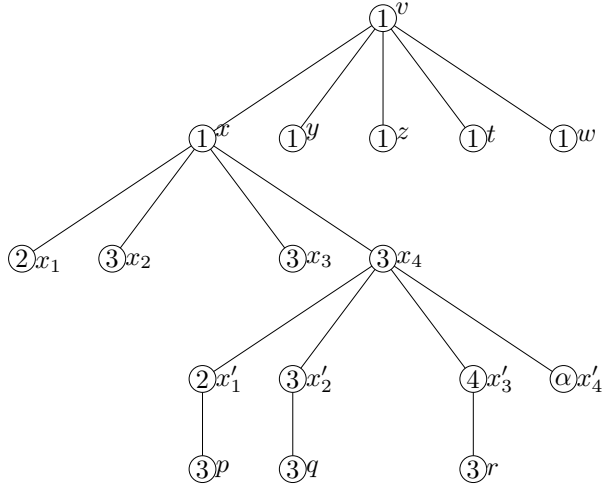

\centering

\grafGT

	\caption{Case 2.  Since there is a $(2,3)$-dangerous cycle for $x$, $c(p)=3$. Since there is  a $3$-mono-dangerous cycle for $x$, $c(q)=3$. Since there is a $(4,3)$-dangerous cycle for $x$, $c(r)=3$. }

\end{figure}

\noindent If we can recolour $x$ with 2, 4 or 5, then we are done. This is impossible only if there is an $(i,3)$-dangerous cycle for $x$, for each $i\in\{2,4,5\}$. Thus, recolouring $x$ with 3 is impossible if we have a 3-mono-dangerous cycle for $x$ or one of $x_2,x_3,x_4$ is 4-saturated. Assume at the beginning that we have a 4-saturated neighbour of $x$, say $x_4$. Hence $C_{x_2}=\{1,1,2,4,5\}$ and we can recolour $x_2$ with 4, so by Claim \ref{claim2} we can  recolour $x$. Therefore, we may assume there is a 3-mono-dangerous cycle for $x$ and none of $x_2,x_3,x_4$ is 4-saturated. Thus, each colour $i\in\{2,3,4,5\}$ occurs twice among $C_{x_2}, C_{x_3}, C_{x_4}$ and hence one multiset contains at least three of these colours.  W.l.o.g. we may assume that $C_{x_4}=\{1,2,3,4,\alpha\}$ (see Figure 3). 
By Claim \ref{claim5} we may assume that $\alpha =2,3$ or 4. If recolouring $x_4$ with $1,4,5$ is impossible, then there are three dangerous cycles for $x_4$ passing through two vertices of $\{x'_1,x'_2,x'_3,x'_4\}$  coloured with $\alpha $ (Claim \ref{claim6} implies that the $(1,\alpha)$-dangerous cycle exists). Let $x'\in \{x'_1,x'_2,x'_3\}$ be the vertex with colour $\alpha$. Thus, $C_{x'}=\{3,1,3,4,5\}$. We can recolour $x'$ with 5 and hence by Claim \ref{claim5} we can recolour $x$.
\end{proof}

\section{Complexity result}

\noindent Now we show that the problem of deciding whether a graph $G\in \cs_5$ has an acyclic  $(\cs_3,\cs_3)$-colouring
is NP-complete.  First we present some special graphs and their properties. Let $G(C_j)$ and $F$ be the graphs  depicted in Figure \ref{fig_gcj}. It is easy to observe that both 
 $G(C_j)$ and $F$ belong to $\cs_5$. Their  acyclic $(\cs_3,\cs_3)$-colourings are presented in Figure \ref{fig_gcj}. The following two observations concerning graph $G(C_j)$ are straightforward.

\newcommand{\grafGB}{
  \psset{unit=1.0cm}  
  \begin{pspicture}(-0.2,-0.2)(3.55,3.65)
    \pnode(1.75,2.80){v1}
    \pnode(0.10,0.70){v2}
    \pnode(1.30,1.60){v3}
    \pnode(1.75,1.00){v4}
    \pnode(3.25,0.70){v5}
    \pnode(2.20,1.60){v6}
    \pnode(2.20,0.10){v7}
    \pnode(1.15,2.95){v8}
    \pnode(0.55,2.65){v9}
    \pnode(0.55,1.90){v10}
    \psline[linecolor=black,linewidth=0.3pt](v2)(v1)
    \psline[linecolor=black,linewidth=0.3pt](v3)(v1)
    \psline[linecolor=black,linewidth=0.3pt](v3)(v2)
    \psline[linecolor=black,linewidth=0.3pt](v4)(v1)
    \psline[linecolor=black,linewidth=0.3pt](v4)(v2)
    \psline[linecolor=black,linewidth=0.3pt](v4)(v3)
 
    \psline[linecolor=black,linewidth=0.3pt](v5)(v2)
    \psline[linecolor=black,linewidth=0.3pt](v5)(v4)
    \psline[linecolor=black,linewidth=0.3pt](v6)(v1)
    \psline[linecolor=black,linewidth=0.3pt](v6)(v3)
    \psline[linecolor=black,linewidth=0.3pt](v6)(v4)
    \psline[linecolor=black,linewidth=0.3pt](v6)(v5)
    \psline[linecolor=black,linewidth=0.3pt](v7)(v2)
    \psline[linecolor=black,linewidth=0.3pt](v7)(v6)
    \psline[linecolor=black,linewidth=0.3pt](v8)(v1)
    \psline[linecolor=black,linewidth=0.3pt](v9)(v8)
    \psline[linecolor=black,linewidth=0.3pt](v10)(v3)
    \psline[linecolor=black,linewidth=0.3pt](v10)(v9)
    \pscircle[fillcolor=black, fillstyle=solid,linewidth=0.2pt](v1){0.05}
      \uput[79.99](v1){$_{d}$}
    \pscircle[fillcolor=white, fillstyle=solid,linewidth=0.2pt](v2){0.05}
      \uput[-149.93](v2){$_{e}$}
    \pscircle[fillcolor=black, fillstyle=solid,linewidth=0.2pt](v3){0.05}
      \uput[161.57](v3){$_{a}$}
    \pscircle[fillcolor=black, fillstyle=solid,linewidth=0.2pt](v4){0.05}
      \uput[-66.80](v4){$_{c}$}
    \pscircle[fillcolor=white, fillstyle=solid,linewidth=0.2pt](v5){0.05}
      \uput[-25.56](v5){$_{g}$}
    \pscircle[fillcolor=black, fillstyle=solid,linewidth=0.2pt](v6){0.05}
      \uput[6.34](v6){$_{b}$}
    \pscircle[fillcolor=white, fillstyle=solid,linewidth=0.2pt](v7){0.05}
      \uput[-64.65](v7){$_{x'}$}
    \pscircle[fillcolor=white, fillstyle=solid,linewidth=0.2pt](v8){0.05}
      \uput[104.74](v8){$_{i}$}
    \pscircle[fillcolor=white, fillstyle=solid,linewidth=0.2pt](v9){0.05}
      \uput[130.91](v9){$_{x}$}
    \pscircle[fillcolor=white, fillstyle=solid,linewidth=0.2pt](v10){0.05}
      \uput[158.96](v10){$_{h}$}
  \end{pspicture}
}

\newcommand{\grafGC}{
  \psset{unit=1.0cm}  
  \begin{pspicture}(-0.2,-0.2)(2.80,4.55)
    \pnode(1.90,3.85){v1}
    \pnode(1.60,2.80){v2}
    \pnode(0.10,2.05){v3}
    \pnode(0.85,2.05){v4}
    \pnode(1.60,1.30){v5}
    \pnode(1.90,0.10){v6}
    \pnode(2.50,2.05){v7}
    \psline[linecolor=black,linewidth=0.3pt](v3)(v1)
    \psline[linecolor=black,linewidth=0.3pt](v3)(v2)
    \psline[linecolor=black,linewidth=0.3pt](v4)(v1)
    \psline[linecolor=black,linewidth=0.3pt](v4)(v2)
    \psline[linecolor=black,linewidth=0.3pt](v5)(v2)
    \psline[linecolor=black,linewidth=0.3pt](v5)(v3)
    \psline[linecolor=black,linewidth=0.3pt](v5)(v4)
    \psline[linecolor=black,linewidth=0.3pt](v6)(v1)
    \psline[linecolor=black,linewidth=0.3pt](v6)(v3)
    \psline[linecolor=black,linewidth=0.3pt](v6)(v4)
    \psline[linecolor=black,linewidth=0.3pt](v7)(v1)
    \psline[linecolor=black,linewidth=0.3pt](v7)(v2)
    \psline[linecolor=black,linewidth=0.3pt](v7)(v5)
    \psline[linecolor=black,linewidth=0.3pt](v7)(v6)
    \pscircle[fillcolor=black, fillstyle=solid,linewidth=0.1pt](v1){0.05}
      \uput[64.36](v1){$$}
    \pscircle[fillcolor=black, fillstyle=solid,linewidth=0.1pt](v2){0.05}
      \uput[53.97](v2){$$}
    \pscircle[fillcolor=black, fillstyle=solid,linewidth=0.1pt](v3){0.05}
      \uput[175.24](v3){$_{u_{j,1}}$}
    \pscircle[fillcolor=black, fillstyle=solid,linewidth=0.1pt](v4){0.05}
      \uput[153.43](v4){$_{u_{j,2}}$}
    \pscircle[fillcolor=black, fillstyle=solid,linewidth=0.1pt](v5){0.05}
      \uput[-48.37](v5){$$}
    \pscircle[fillcolor=white, fillstyle=solid,linewidth=0.1pt](v6){0.05}
      \uput[-64.36](v6){$$}
    \pscircle[fillcolor=white, fillstyle=solid,linewidth=0.1pt](v7){0.05}
      \uput[2.86](v7){$_{u_{j,3}}$}
  \end{pspicture}
}

\begin{figure}[htbp]
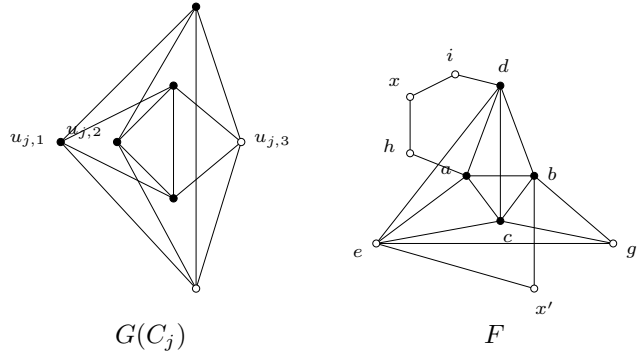

\centering
\begin{tabular}{ccc}
\grafGC &\hspace{10pt} & \grafGB\\
$G(C_j)$ &\hspace{10pt} & $F$\\
\end{tabular}

	\caption{Graphs $G(C_j)$ and $F$.}
	\label{fig_gcj}

\end{figure}

 \begin{obs} \label{c_jkolor}
In any acyclic  $(\cs_3,\cs_3)$-colouring of   $G(C_j)$  the vertices 
 $u_{j,1},$ $u_{j,2},$ $u_{j,3}$ are not all coloured with the same colour.
\end{obs}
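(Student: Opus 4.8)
The plan is to argue by contradiction, playing the two defining requirements of an acyclic $(\mathcal{S}_3,\mathcal{S}_3)$-colouring against each other: that each colour class induce a subgraph of maximum degree at most $3$, and that every bichromatic cycle contain a monochromatic edge (equivalently, that the edges joining the two colour classes form a forest). The whole argument rests on one structural observation read off from the drawing of $G(C_j)$: the three vertices $u_{j,1},u_{j,2},u_{j,3}$ are pairwise non-adjacent and have the \emph{same} four neighbours, which I will call $w_1,w_2,w_3,w_4$; moreover the only edges among these four are $w_1w_2$ and $w_3w_4$. Hence $\{u_{j,1},u_{j,2},u_{j,3}\}\cup\{w_1,w_2,w_3,w_4\}$ spans a copy of $K_{3,4}$ together with the matching $\{w_1w_2,w_3w_4\}$, and every $w_i$ has degree exactly $4$: its three neighbours $u_{j,1},u_{j,2},u_{j,3}$ plus its matching partner.

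Assume, for contradiction, that $u_{j,1},u_{j,2},u_{j,3}$ all receive the same colour; by the symmetry of the two colours I may take this colour to be $1$. First I would bound how many of the $w_i$ can also be coloured $1$. Since the only neighbours of $u_{j,1}$ are $w_1,\dots,w_4$, the degree of $u_{j,1}$ inside colour class $1$ equals the number of $w_i$ coloured $1$; as this must not exceed $3$, at least one $w_i$ is coloured $2$. Conversely, suppose two of them, say $w_i$ and $w_{i'}$, were coloured $2$. Both are adjacent to the colour-$1$ vertices $u_{j,1}$ and $u_{j,2}$, so $u_{j,1}\,w_i\,u_{j,2}\,w_{i'}\,u_{j,1}$ is a $4$-cycle all of whose edges join colour $1$ to colour $2$; it is therefore a bichromatic cycle with no monochromatic edge, contradicting acyclicity. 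Thus \emph{exactly one} of $w_1,w_2,w_3,w_4$ is coloured $2$ and the remaining three are coloured $1$.

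It is here that the two internal edges do the work. The single colour-$2$ vertex can belong to at most one of the matching edges $w_1w_2$ and $w_3w_4$, so at least one of these edges, say $w_kw_\ell$, has both endpoints coloured $1$. But then $w_k$ is adjacent to $u_{j,1},u_{j,2},u_{j,3}$ (all coloured $1$) and to $w_\ell$ (also coloured $1$), giving $w_k$ four neighbours in its own colour class and violating $\mathcal{S}_3$. This contradiction shows that $u_{j,1},u_{j,2},u_{j,3}$ cannot all be monochromatic.

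The proof is short, so there is no single heavy obstacle; the only step that genuinely uses the geometry of $G(C_j)$ beyond the bare $K_{3,4}$ skeleton is the last one, where the matching $\{w_1w_2,w_3w_4\}$ is precisely what rules out the last surviving configuration (three $w_i$ coloured $1$, one coloured $2$). The main point requiring care is purely bookkeeping: to read the adjacencies of $G(C_j)$ correctly and to confirm that the two extra edges among the common neighbours form a perfect matching on $\{w_1,w_2,w_3,w_4\}$, since it is exactly this matching that prevents a monochromatic assignment of the three $u$-vertices.
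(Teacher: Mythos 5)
Your proof is correct, and your reading of the graph matches the figure exactly: $G(C_j)$ is $K_{3,4}$ with parts $\{u_{j,1},u_{j,2},u_{j,3}\}$ and the four common neighbours, plus a perfect matching on the $4$-side. The paper states this observation without proof (it is declared straightforward), and your three-step verification --- at least one common neighbour must avoid the $u$-colour (else some $u_{j,k}$ has four same-coloured neighbours, violating $\mathcal{S}_3$), at most one may (else two of them together with two $u$-vertices form an alternating $4$-cycle), and the surviving matching edge then gives a vertex with four neighbours in its own colour class --- is exactly the natural argument the authors had in mind.
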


\begin{obs}\label{c_j}
Any partition of the set $\{u_{j,1},$ $u_{j,2},$ $u_{j,3}\}$  into two nonempty parts can be extended to an acyclic $(\cs_3,\cs_3)$-colouring of  $G(C_j)$.
\end{obs}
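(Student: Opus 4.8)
The plan is to first extract the combinatorial structure of $G(C_j)$ from Figure~\ref{fig_gcj}. Writing $a,b,c,d$ for the four vertices of $G(C_j)$ other than $u_{j,1},u_{j,2},u_{j,3}$, I would check that each $u_{j,i}$ is adjacent to all of $a,b,c,d$ and that the only further edges form a matching, say $ad$ and $bc$; hence $G(C_j)$ is $K_{3,4}$ (with parts $\{u_{j,1},u_{j,2},u_{j,3}\}$ and $\{a,b,c,d\}$) together with the two edges $ad$ and $bc$. Two symmetries then follow immediately and drive the whole argument: every permutation of $u_{j,1},u_{j,2},u_{j,3}$ is an automorphism of $G(C_j)$, and swapping the two colours preserves being an acyclic $(\cs_3,\cs_3)$-colouring (both classes carry the same constraint $\cs_3$). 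Using them I would reduce the three set partitions of $\{u_{j,1},u_{j,2},u_{j,3}\}$ into two nonempty parts to the single representative case $c(u_{j,1})=1$, $c(u_{j,2})=c(u_{j,3})=2$.

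Next I would rephrase the two defining requirements. As every vertex has degree $4$, the $\cs_3$ condition merely forbids a vertex whose four neighbours all share its colour; for $a,b,c,d$ this is automatic as soon as both colours occur among them, while for the $u$'s it says $u_{j,1}$ needs a colour-$2$ neighbour and $u_{j,2},u_{j,3}$ each need a colour-$1$ neighbour in $\{a,b,c,d\}$. The acyclicity condition says the bichromatic edges must form a forest. I would then colour one vertex of $\{a,b,c,d\}$, say $a$, with $1$ and the other three with $2$, and verify the two properties by hand: every vertex then has at most three same-coloured neighbours, and the six bichromatic edges $u_{j,1}b$, $u_{j,1}c$, $u_{j,1}d$, $au_{j,2}$, $au_{j,3}$, $ad$ span all seven vertices as a tree — they are joined into one component through the path $u_{j,1}\,d\,a$ provided by the matching edge $ad$ — so no bichromatic cycle exists.

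I expect the acyclicity, rather than the degree bound, to be the only real difficulty, precisely because the three $u$'s dominate the graph. Indeed, once two of them receive the same colour (here $u_{j,2},u_{j,3}$), any two vertices $\alpha,\beta\in\{a,b,c,d\}$ of the opposite colour close a bichromatic $4$-cycle $u_{j,2}\,\alpha\,u_{j,3}\,\beta$. This is exactly what forces the extension: to destroy every such $4$-cycle at most one of $a,b,c,d$ may be coloured $1$, whereas the $\cs_3$ requirement for $u_{j,2}$ and $u_{j,3}$ forces at least one of them to be coloured $1$; so exactly one must be, and the colouring above is essentially the only option. The clean point that makes the argument close is that with a single colour-$1$ vertex on the $\{a,b,c,d\}$ side the entire bichromatic subgraph collapses to one spanning tree, leaving no room for a cycle. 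Finally I would transport this colouring by the $u$-permutations and the colour swap to cover every partition, which finishes the argument.
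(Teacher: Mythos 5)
Your proposal is correct and matches the paper's intent: the paper states this observation without proof (calling it straightforward) and simply exhibits such a colouring in Figure~4, which is exactly the kind of colouring you construct — the singleton part's colour given to exactly one vertex of the degree-4 side incident to a matching edge, so the bichromatic edges form two stars glued at one vertex. Your symmetry reduction and the necessity remark (exactly one of $a,b,c,d$ may take the singleton's colour) are accurate explicit verifications of what the paper leaves to inspection.
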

 
\noindent A coloured vertex $v$ is called {\it $3$-saturated}, if it has exactly $3$ neighbours coloured with $c(v)$. Considering the graph $F$, its important properties are the following.
 
\begin{obs}\label{kolF}
 In any acyclic $(\cs_3,\cs_3)$-colour\-ing $f$ of  $F$, both $x$ and $x'$ have the same colour, $x$ has exactly two neighbours coloured with $f(x)$, $x'$ has exactly one neighbour coloured with $f(x')$.
\end{obs}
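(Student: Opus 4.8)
The plan is to exploit three rigid features of $F$: the triangle $\{a,c,d\}$, the pair $e,b$, which are non-adjacent \emph{twins} (both have neighbourhood exactly $\{a,c,d,g,x'\}$), and the ``ear'' $i\,x\,h$ hung on $d$ and $a$ together with the edge $da$. Since a $(\cs_3,\cs_3)$-colouring uses only two colours, throughout I treat $f$ as a $2$-colouring and read the acyclicity requirement as the absence of \emph{alternating cycles}. The single elementary tool I reuse is this: if a copy of $K_4$ is $2$-coloured with a $2$--$2$ split, then its four cross-edges form an alternating $4$-cycle, which is forbidden; hence on every $K_4$ the split is $3$--$1$ or $4$--$0$. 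Observe that $\{a,c,d,e\}$ and $\{a,c,d,b\}$ are both $K_4$'s.

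First I would prove $f(e)\neq f(b)$, which already settles the claim about $x'$. Suppose $f(e)=f(b)$, say colour $1$. If two common neighbours $u,w$ of $e$ and $b$ were both coloured $2$, then $e\,u\,b\,w$ would be an alternating $4$-cycle; so at most one of the five common neighbours is coloured $2$, i.e.\ at least four are coloured $1$. As these are all neighbours of $e$, this contradicts the fact that the colour-$1$ class has maximum degree at most $3$. Hence $f(e)\neq f(b)$, and since $N(x')=\{e,b\}$, exactly one neighbour of $x'$ carries $f(x')$, whatever $f(x')$ turns out to be.

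Next I would pin down the triangle and the low-degree vertices. Feeding $f(e)\neq f(b)$ into the two $K_4$-constraints forces $\{a,c,d\}$ to be monochromatic: letting $t$ be the number of vertices of $a,c,d$ sharing $e$'s colour, the $K_4$'s $\{a,c,d,e\}$ and $\{a,c,d,b\}$ respectively rule out $t=1$ and $t=2$, leaving $t\in\{0,3\}$. Say the triangle has colour $1$; then one twin, $\gamma$, has colour $1$ and the other, $\beta$, has colour $2$. Applying $\cs_3$ in turn yields $f(g)=f(x')=f(i)=f(h)=2$: the vertex $c$ already has the three colour-$1$ neighbours $d,a,\gamma$, so $g$ must be $2$; likewise $\gamma$ already has $d,a,c$, forcing $x'=2$; and $d$ (resp.\ $a$) already has $a,c,\gamma$ (resp.\ $d,c,\gamma$), forcing $i$ (resp.\ $h$) to colour $2$.

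Finally I would determine $x$. Its neighbours $i,h$ are now colour $2$, so it remains only to exclude $f(x)=1$. If $x$ had colour $1$, then using the colour-$2$ twin $\beta$, which is adjacent to both $d$ and $a$, the cycle $x\,i\,d\,\beta\,a\,h\,x$ would read $1\,2\,1\,2\,1\,2$ -- a forbidden alternating $6$-cycle. Hence $f(x)=2=f(i)=f(h)$, so $x$ has exactly two neighbours of its own colour, and $f(x)=2=f(x')$, which establishes all three assertions. I expect the only real work to be the bookkeeping that rules out alternating cycles; the twin observation and the $K_4$ split lemma are precisely what compress that bookkeeping into the short chain above.
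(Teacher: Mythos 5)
Your proposal is correct, and it takes a genuinely different route from the paper. I checked the structural facts it rests on against the definition of $F$: indeed $N(e)=N(b)=\{a,c,d,g,x'\}$ with $e$ and $b$ non-adjacent, both $\{a,c,d,e\}$ and $\{a,c,d,b\}$ induce $K_4$, and your closing $6$-cycle $x\,i\,d\,\beta\,a\,h\,x$ exists since $\beta\in\{e,b\}$ is adjacent to both $d$ and $a$. Each of your deductions is sound: the twin argument forcing $f(e)\neq f(b)$ (two opposite-coloured common neighbours would close an alternating $C_4$, so at least four of the five common neighbours would share $e$'s colour, violating $\cs_3$), the $2$--$2$ split lemma on the two $K_4$'s forcing $\{a,c,d\}$ monochromatic, the four saturation steps forcing $g,x',i,h$ into the opposite colour, and the final alternating $6$-cycle through $\beta$ pinning down $x$ -- together these establish all three assertions. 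The paper instead proceeds by exhaustive case analysis on the colours of the triangle $\{a,b,c\}$: the monochromatic case with a further subcase on $d$, then the three $2$--$1$ splits, in each branch propagating forced colours via $3$-saturation, killing the dead branches with ad hoc alternating $4$-cycles, and finally fixing the colour of $x$ through the alternating path between $h$ and $i$. What the paper's method buys is uniformity: it is the same saturation-plus-dangerous-cycle bookkeeping used throughout the rest of the paper and needs no structural insight into $F$, at the cost of roughly five subcases. What yours buys is economy and explanation: the case tree collapses to a single colour-swap symmetry, and the rigidity of the gadget is exposed as coming from a pair of non-adjacent twins attached to two $K_4$'s glued along a triangle -- a viewpoint that would also make it easier to design or verify variants of the gadget.
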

\begin{proof}
Let $f$ be an acyclic $(\cs_3,\cs_3)$-colouring of $F$. We prove that $f$ has the desired properties, by considering all possible colourings of the vertices $a,b,c$.  

\noindent{\bf\textit{Case} 1} Assume that $a,b,c$ all have the same colour, say 1.

If $d$ also has colour 1, then $i,h,e,g$ and $x'$ must be coloured with 2, because each of $a,b,c,d$ is 3-saturated. 
 There is an alternating path between  $h$ and $i$, hence  $x$ must be coloured with 2.
Clearly, such a colouring has the desired properties.

Suppose now that $d$ has colour 2. It follows   $e$ and $g$ must be coloured with 1, since otherwise we have an alternating cycle. Thus, $c$ has four neighbour in its own colour, a contradiction.

\noindent{\bf\textit{Case} 2} Suppose two vertices among  $a,b,c$ are coloured with 1, the remaining one is coloured with 2. There are three possibilities. 

Assume $a,b$ have colour 1, $c$ has colour 2. Observe that $d,e,g$ must have colour 1, since otherwise an alternating cycle occurs. Hence $b$ is 3-saturated, thus 
$x'$ must be coloured with 2. It follows that there is an alternating cycle induced by $x',b,c,e$, a contradiction.

Suppose now that $a,c$ have colour 1, $b$ has colour 2. Clearly,  $d,e$ must have colour 1, otherwise an alternating cycle occurs. Thus, $d$ is 3-saturated. Hence 
 $i$ must be coloured with 2. Further, each of $c,e,a$ is 3-saturated, thus 
 $g, x',h$ must have colour 2.
There is an alternating path between  $h$ and $i$, hence $x$ must have colour 2. Again, such a colouring has the desired properties.

Finally, assume  $b,c$ have colour 1, $a$ has colour 2. It follows $d,g$ must be coloured with 1, otherwise we would have an alternating cycle. Thus, $c$ is 3-saturated, therefore $e$ must have colour 2, but then vertices $d,a,c,e$ induce an alternating cycle, a contradiction follows.
\end{proof}

\begin{obs}\label{sciezkaF}
In any acyclic  $(\cs_3,\cs_3)$-colour\-ing of $F$
there is no alternating path between the vertices $x$ and $x'$. 
\end{obs}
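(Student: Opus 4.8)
The plan is to derive the statement almost immediately from Observation \ref{kolF}, which already pins down exactly how $x$ is coloured. First I would recall the structure of $F$ at the vertex $x$: it has degree $2$, with $N(x)=\{i,h\}$. Observation \ref{kolF} asserts that in any acyclic $(\cs_3,\cs_3)$-colouring $f$ of $F$ the vertex $x$ has exactly two neighbours coloured with $f(x)$. Since $x$ has only the two neighbours $i$ and $h$, this forces $f(i)=f(h)=f(x)$, so both neighbours of $x$ share its colour.

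Next I would observe that both edges incident to $x$, namely $xi$ and $xh$, are therefore monochromatic. Any path that has $x$ as one of its endpoints must leave $x$ along one of these two edges, so its first edge is monochromatic. By definition an alternating path contains no monochromatic edge, whence no alternating path can have $x$ as an endpoint; in particular there is no alternating path between $x$ and $x'$, which is exactly what the statement claims.

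The argument is short precisely because Observation \ref{kolF} does the substantive work, and there is no genuine combinatorial obstacle left to overcome. The one point that requires care, and the only place where the reasoning could slip, is the translation of the phrase ``$x$ has exactly two neighbours coloured with $f(x)$'' into the stronger-sounding assertion that \emph{every} edge incident to $x$ is monochromatic; this step is valid only because $x$ has degree exactly $2$, a fact one reads directly off the depiction of $F$ in Figure \ref{fig_gcj}. Once this degree count is made explicit, the conclusion is forced and the proof is complete.
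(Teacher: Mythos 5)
Your proof is correct and takes the same route the paper intends: the paper states Observation~\ref{sciezkaF} without proof, as an immediate consequence of Observation~\ref{kolF}, and your argument makes that deduction explicit. Since $d(x)=2$ in $F$, the conclusion of Observation~\ref{kolF} that $x$ has exactly two neighbours coloured $f(x)$ forces both edges incident to $x$ to be monochromatic, so no alternating path (which by definition contains no monochromatic edge) can have $x$ as an endpoint --- exactly the one-line justification the authors omit.
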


\noindent 
Let $m$ be a positive integer. Based on the graph $F$, we construct a graph  $G^m(v_i)$.  We take
 $m$ copies of $F$, say $F_1,F_2,\ldots,F_m$. In each copy $F_s$,  vertices 
$x$ and $x'$ are denoted by  $x_{i,s}$ and  $x'_{i,s}$, resp. Next, for $s\in\{1,\ldots,m-1\}$, we identify  
 $x'_{i,s}$ with $x_{i,s+1}$,  and denote the obtained vertex  by   $v_{i,s+1}$.
Finally, we identify $x_{i,1}$ with $x'_{i,m}$ and denote the new vertex by $v_{i,1}$. 
Observe that  $G^m(v_i)\in\cs_5$ and  the vertices $v_{i,s}$ are all of degree 4. Such a graph for $m=3$ is presented in Figure \ref{fig_gvi}.

\newcommand{\grafGe}{
  \psset{unit=0.5cm}  
  \begin{pspicture}(-0.2,-3.2)(11.60,5.00)
    \pnode(1.70,2.10){v1}
    \pnode(0.10,0.10){v2}
    \pnode(1.30,0.90){v3}
    \pnode(1.70,0.50){v4}
    \pnode(3.10,0.10){v5}
    \pnode(2.10,0.90){v6}
    \pnode(1.10,2.30){v7}
    \pnode(0.30,2.10){v8}
    \pnode(0.50,1.30){v9}
    
    \pnode(3.70,-1.00){v10}
    \pnode(4.50,-0.70){v11}
    \pnode(5.10,-0.90){v12}
    \pnode(3.90,-1.70){v13}
    \pnode(4.70,-2.10){v14}
    \pnode(5.50,-2.10){v15}
    \pnode(5.10,-2.50){v16}
    \pnode(3.30,-2.90){v17}
    \pnode(6.70,-2.90){v18}
    
    \pnode(8.70,2.10){v19}
    \pnode(9.50,2.30){v20}
    \pnode(10.10,2.10){v21}
    \pnode(8.90,1.30){v22}
    \pnode(9.70,0.90){v23}
    \pnode(10.50,0.90){v24}
    \pnode(10.10,0.50){v25}
    \pnode(8.30,0.10){v26}
    \pnode(11.70,0.10){v27}
    
    \psline[linecolor=black,linewidth=0.2pt](v2)(v1)
    \psline[linecolor=black,linewidth=0.2pt](v3)(v1)
    \psline[linecolor=black,linewidth=0.2pt](v3)(v2)
    \psline[linecolor=black,linewidth=0.2pt](v4)(v1)
    \psline[linecolor=black,linewidth=0.2pt](v4)(v2)
    \psline[linecolor=black,linewidth=0.2pt](v4)(v3)
    \psline[linecolor=black,linewidth=0.2pt](v5)(v2)
    \psline[linecolor=black,linewidth=0.2pt](v5)(v4)
    \psline[linecolor=black,linewidth=0.2pt](v6)(v1)
    \psline[linecolor=black,linewidth=0.2pt](v6)(v3)
    \psline[linecolor=black,linewidth=0.2pt](v6)(v4)
    \psline[linecolor=black,linewidth=0.2pt](v6)(v5)
    \psline[linecolor=black,linewidth=0.2pt](v7)(v1)
    \psline[linecolor=black,linewidth=0.2pt](v8)(v7)
    \psline[linecolor=black,linewidth=0.2pt](v9)(v3)
    \psline[linecolor=black,linewidth=0.2pt](v9)(v8)
    
    \psline[linecolor=red,linewidth=0.2pt](v10)(v2)
    \psline[linecolor=red,linewidth=0.2pt](v10)(v6)
    
    \psline[linecolor=black,linewidth=0.2pt](v11)(v10)
    \psline[linecolor=black,linewidth=0.2pt](v12)(v11)
    \psline[linecolor=black,linewidth=0.2pt](v13)(v10)
    \psline[linecolor=black,linewidth=0.2pt](v14)(v12)
    \psline[linecolor=black,linewidth=0.2pt](v14)(v13)
    \psline[linecolor=black,linewidth=0.2pt](v15)(v12)
    \psline[linecolor=black,linewidth=0.2pt](v15)(v14)
    \psline[linecolor=black,linewidth=0.2pt](v16)(v12)
    \psline[linecolor=black,linewidth=0.2pt](v16)(v14)
    \psline[linecolor=black,linewidth=0.2pt](v16)(v15)
    \psline[linecolor=black,linewidth=0.2pt](v17)(v12)
    \psline[linecolor=black,linewidth=0.2pt](v17)(v14)
    \psline[linecolor=black,linewidth=0.2pt](v17)(v16)
    \psline[linecolor=black,linewidth=0.2pt](v18)(v15)
    \psline[linecolor=black,linewidth=0.2pt](v18)(v16)
    \psline[linecolor=black,linewidth=0.2pt](v18)(v17)
    
    \psline[linecolor=black,linewidth=0.2pt](v20)(v19)
    \psline[linecolor=black,linewidth=0.2pt](v21)(v20)
    \psline[linecolor=black,linewidth=0.2pt](v22)(v19)
    \psline[linecolor=black,linewidth=0.2pt](v23)(v21)
    \psline[linecolor=black,linewidth=0.2pt](v23)(v22)
    \psline[linecolor=black,linewidth=0.2pt](v24)(v21)
    \psline[linecolor=black,linewidth=0.2pt](v24)(v23)
    \psline[linecolor=black,linewidth=0.2pt](v25)(v21)
    \psline[linecolor=black,linewidth=0.2pt](v25)(v23)
    \psline[linecolor=black,linewidth=0.2pt](v25)(v24)
    \psline[linecolor=black,linewidth=0.2pt](v26)(v21)
    \psline[linecolor=black,linewidth=0.2pt](v26)(v23)
    \psline[linecolor=black,linewidth=0.2pt](v26)(v25)
    \psline[linecolor=black,linewidth=0.2pt](v27)(v24)
    \psline[linecolor=black,linewidth=0.2pt](v27)(v25)
    \psline[linecolor=black,linewidth=0.2pt](v27)(v26)
    
    \psline[linecolor=red,linewidth=0.2pt](v19)(v15)
    \psline[linecolor=red,linewidth=0.2pt](v19)(v17)
    
    \psline[linecolor=red,linewidth=0.2pt](v8)(v24)
    \psline[linecolor=red,linewidth=0.2pt](v8)(v26)
    
    \pscircle[fillcolor=black, fillstyle=solid,linewidth=0.1pt](v1){0.05}
      \uput[152.10](v1){$$}
    \pscircle[fillcolor=black, fillstyle=solid,linewidth=0.1pt](v2){0.05}
      \uput[-161.57](v2){$$}
    \pscircle[fillcolor=black, fillstyle=solid,linewidth=0.1pt](v3){0.05}
      \uput[-171.87](v3){$$}
    \pscircle[fillcolor=black, fillstyle=solid,linewidth=0.1pt](v4){0.05}
      \uput[-157.62](v4){$$}
    \pscircle[fillcolor=black, fillstyle=solid,linewidth=0.1pt](v5){0.05}
      \uput[-105.26](v5){$$}
    \pscircle[fillcolor=black, fillstyle=solid,linewidth=0.1pt](v6){0.05}
      \uput[-167.01](v6){$$}
    \pscircle[fillcolor=black, fillstyle=solid,linewidth=0.1pt](v7){0.05}
      \uput[154.44](v7){$$}
    \pscircle[fillcolor=black, fillstyle=solid,linewidth=0.1pt](v8){0.05}
      \uput[160.46](v8){$_{v_{i,1}}$}
    \pscircle[fillcolor=black, fillstyle=solid,linewidth=0.1pt](v9){0.05}
      \uput[178.03](v9){$$}
    \pscircle[fillcolor=black, fillstyle=solid,linewidth=0.1pt](v10){0.05}
      \uput[65.53](v10){$_{v_{i,2}}$}
    \pscircle[fillcolor=black, fillstyle=solid,linewidth=0.1pt](v11){0.05}
      \uput[36.25](v11){$$}
    \pscircle[fillcolor=black, fillstyle=solid,linewidth=0.1pt](v12){0.05}
      \uput[23.20](v12){$$}
    \pscircle[fillcolor=black, fillstyle=solid,linewidth=0.1pt](v13){0.05}
      \uput[6.34](v13){$$}
    \pscircle[fillcolor=black, fillstyle=solid,linewidth=0.1pt](v14){0.05}
      \uput[-10.01](v14){$$}
    \pscircle[fillcolor=black, fillstyle=solid,linewidth=0.1pt](v15){0.05}
      \uput[-6.84](v15){$$}
    \pscircle[fillcolor=black, fillstyle=solid,linewidth=0.1pt](v16){0.05}
      \uput[-18.43](v16){$$}
    \pscircle[fillcolor=black, fillstyle=solid,linewidth=0.1pt](v17){0.05}
      \uput[-74.74](v17){$$}
    \pscircle[fillcolor=black, fillstyle=solid,linewidth=0.1pt](v18){0.05}
      \uput[-16.56](v18){$$}

 \pscircle[fillcolor=black, fillstyle=solid,linewidth=0.1pt](v19){0.05}
      \uput[57.53](v19){$_{v_{i,3}}$}
    \pscircle[fillcolor=black, fillstyle=solid,linewidth=0.1pt](v20){0.05}
      \uput[36.25](v20){$$}
    \pscircle[fillcolor=black, fillstyle=solid,linewidth=0.1pt](v21){0.05}
      \uput[23.20](v21){$$}
    \pscircle[fillcolor=black, fillstyle=solid,linewidth=0.1pt](v22){0.05}
      \uput[6.34](v22){$$}
    \pscircle[fillcolor=black, fillstyle=solid,linewidth=0.1pt](v23){0.05}
      \uput[-10.01](v23){$$}
    \pscircle[fillcolor=black, fillstyle=solid,linewidth=0.1pt](v24){0.05}
      \uput[-6.84](v24){$$}
    \pscircle[fillcolor=black, fillstyle=solid,linewidth=0.1pt](v25){0.05}
      \uput[-18.43](v25){$$}
    \pscircle[fillcolor=black, fillstyle=solid,linewidth=0.1pt](v26){0.05}
      \uput[-74.74](v26){$$}
    \pscircle[fillcolor=black, fillstyle=solid,linewidth=0.1pt](v27){0.05}
      \uput[-16.56](v27){$$}
  \end{pspicture}
}

\begin{figure}[htbp]
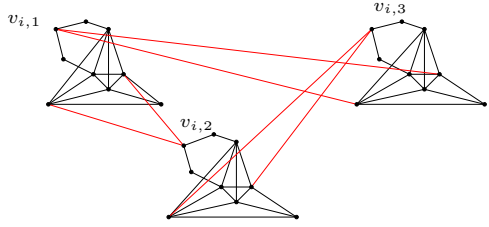

\centering
\grafGe
	\caption{Graph $G^3(v_i)$.}
	\label{fig_gvi}
	\centering
\end{figure}

\begin{obs}\label{g_vi}
For any positive integer $m$, the graph $G^m(v_i)$ has an acyclic $(\cs_3,\cs_3)$-colouring. In any such  a colouring the vertices  $v_{i,s}$ are 3-saturated and are all in the same colour, for $s\in\{1,\ldots,m\}$.
\end{obs}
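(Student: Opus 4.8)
The plan is to reduce the statement to the two structural facts about a single copy of $F$ recorded in Observations~\ref{kolF} and~\ref{sciezkaF}, using that $G^m(v_i)$ is a \emph{necklace} of $m$ copies of $F$ that are glued only at the degree-$4$ vertices $v_{i,s}$: in copy $F_s$ the vertex $v_{i,s}$ plays the role of $x$ and $v_{i,s+1}$ (indices taken cyclically, with $F_0=F_m$) plays the role of $x'$. The basic tool is the \emph{restriction principle}: if $c$ is an acyclic $(\cs_3,\cs_3)$-colouring of $G^m(v_i)$, then its restriction to any copy $F_s$ is again an acyclic $(\cs_3,\cs_3)$-colouring of $F_s$. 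Indeed, passing to the subgraph induced on $V(F_s)$ only removes edges, so no monochromatic degree can exceed $3$, and every alternating bichromatic cycle of $F_s$ would already be an alternating cycle of $G^m(v_i)$, which is excluded. This lets me invoke Observations~\ref{kolF} and~\ref{sciezkaF} inside each copy.

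The ``any such colouring'' part then follows almost immediately. Applying Observation~\ref{kolF} in $F_s$, the vertices $v_{i,s}$ and $v_{i,s+1}$ get the same colour; chaining this around $s=1,\ldots,m$ forces all $v_{i,s}$ to share a single colour. For the saturation count I fix $v_{i,s}$ and use that it lies in exactly the two copies $F_{s-1}$ and $F_s$: as the vertex $x$ of $F_s$ it has exactly two same-coloured neighbours there, and as the vertex $x'$ of $F_{s-1}$ it has exactly one. These two neighbour sets are disjoint and together exhaust the four edges at $v_{i,s}$, so $v_{i,s}$ has exactly $2+1=3$ neighbours in its own colour, i.e.\ it is $3$-saturated.

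It remains to exhibit one acyclic $(\cs_3,\cs_3)$-colouring, and this is the step I expect to be the main obstacle. I would colour every copy $F_s$ by the same explicit pattern produced in the proof of Observation~\ref{kolF}, chosen so that $x$ and $x'$ always receive one fixed colour; since the vertex $x$ of $F_{s+1}$ coincides with the vertex $x'$ of $F_s$, the per-copy colourings agree at every gluing point, and the degree bookkeeping at each $v_{i,s}$ yields exactly three same-coloured neighbours. The delicate point is acyclicity. An alternating bichromatic cycle contained in a single copy is ruled out by the validity of that copy's colouring, so any such cycle would have to use edges of at least two copies. Tracing it, each maximal stretch of consecutive edges lying in one copy $F_s$ is an alternating path whose two endpoints are the only vertices that $F_s$ shares with the rest of the graph, namely $v_{i,s}=x$ and $v_{i,s+1}=x'$; but this is precisely an alternating path between $x$ and $x'$ in $F_s$, which Observation~\ref{sciezkaF} forbids. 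Hence no alternating cycle can thread through two copies, and the constructed colouring is acyclic. The same decomposition argument covers the degenerate small values of $m$.
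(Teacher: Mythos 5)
Your proposal is correct and takes essentially the same approach as the paper: the paper's proof of Observation \ref{g_vi} is exactly the appeal to the construction of $G^m(v_i)$ together with Observations \ref{kolF} and \ref{sciezkaF} that you carry out. The details you supply --- restricting the colouring to each copy of $F$, the $2+1$ saturation count at each glued vertex $v_{i,s}$, and the maximal-run argument ruling out alternating cycles threading through several copies --- are precisely what the paper's one-line proof leaves implicit.
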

\begin{proof}
Follows from the construction of  $G^m(v_i)$ and Observations \ref{kolF} and \ref{sciezkaF}. 
\end{proof}


\begin{thm}\label{thm_np}
The problem of deciding whether a  graph admits an acyclic $(\cs_3,\cs_3)$-colouring is NP-complete, even for graphs with maximum degree at most $5$.  
\end{thm}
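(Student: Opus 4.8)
The plan is to establish NP-hardness by a polynomial reduction from \emph{Monotone Not-All-Equal 3-Satisfiability} (equivalently, $2$-colourability of $3$-uniform hypergraphs / Set Splitting), which is well known to be NP-complete. Membership in NP is immediate: given a $2$-colouring one checks in polynomial time that each colour class induces a subgraph of maximum degree at most $3$ and that the subgraph formed by the bichromatic edges is a forest.

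For the reduction I would start from an instance $\Phi$ with variables $x_1,\dots,x_n$ and clauses $C_1,\dots,C_m$, each clause being an unnegated triple of variables. For each variable $x_i$ occurring $m_i$ times I take the variable gadget $G^{m_i}(v_i)$, and for each clause $C_j$ the clause gadget $G(C_j)$. I then link them by adding, for the $k$-th occurrence of $x_i$ in $C_j$, a single edge joining the clause vertex $u_{j,k}$ to a private vertex $v_{i,s}$ of $G^{m_i}(v_i)$, using a distinct $v_{i,s}$ for each occurrence. Since every $u_{j,k}$ and every $v_{i,s}$ has degree $4$ in its gadget, each gains exactly one edge, so the resulting graph $G$ lies in $\cs_5$; it has $O(n+m)$ vertices and is constructed in polynomial time.

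The colour-flipping behaviour is the core of the correspondence. In any acyclic $(\cs_3,\cs_3)$-colouring of $G$, its restriction to each $G^{m_i}(v_i)$ is again such a colouring, so Observation \ref{g_vi} forces all vertices $v_{i,s}$ of one variable gadget to share a colour and to be $3$-saturated \emph{inside} that gadget; as $v_{i,s}$ then already has three same-coloured neighbours, its extra neighbour $u_{j,k}$ must take the other colour. Reading the colour of the $v_{i,\cdot}$ as the truth value of $x_i$, Observation \ref{c_jkolor} says the three clause vertices of $G(C_j)$ are not all equal, hence, after complementing, the three variables of $C_j$ are not all equal and $\Phi$ is NAE-satisfied. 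Conversely, from a NAE-satisfying assignment I would colour each $v_{i,\cdot}$ by the variable's value (possible by Observation \ref{g_vi} together with colour symmetry), which forces each $u_{j,k}$ to the complementary colour; since the three variables of $C_j$ are not all equal, neither are the three clause vertices, and Observation \ref{c_j} lets me extend the colouring over each $G(C_j)$.

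The step I expect to be the main obstacle is verifying that gluing these per-gadget colourings yields a globally \emph{acyclic} colouring, i.e. that no bichromatic cycle runs across several gadgets through the connecting edges. Here I would exploit the bridge-like behaviour guaranteed by Observation \ref{sciezkaF}: every connecting edge has one endpoint $v_{i,s}$ lying in a variable gadget, and any bichromatic cycle using a connecting edge would have to enter some $G^{m_i}(v_i)$ at one vertex $v_{i,s}$ and leave it at a different vertex $v_{i,s'}$, yielding an alternating path between these two connection points. Such a path must traverse a full copy of $F$ from its $x$ to its $x'$, which Observation \ref{sciezkaF} forbids; consequently no bichromatic cycle can span more than one gadget, every bichromatic cycle is confined to a single gadget where acyclicity already holds, and both directions of the equivalence follow.
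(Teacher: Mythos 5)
Your proposal is correct and follows essentially the same route as the paper: a reduction from monotone NAE-3SAT (the paper's 3NN-SAT) using the gadgets $G^m(v_i)$ and $G(C_j)$, with Observation \ref{g_vi} forcing the variable vertices to be monochromatic and $3$-saturated, Observations \ref{c_jkolor} and \ref{c_j} handling the clause gadgets, and Observation \ref{sciezkaF} ruling out cross-gadget alternating cycles. The only deviations --- sizing each variable gadget by the number of occurrences $m_i$ rather than uniformly by $m$, and spelling out the cut-vertex argument for why a bichromatic cycle would force an alternating $x$--$x'$ path through a copy of $F$ --- are cosmetic refinements of the paper's argument.
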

\begin{proof}
Clearly the problem of acyclic $(\cs_3,\cs_3)$-colouring is in NP. We will show that it is NP-complete. We use the reduction from the 3NN-SAT problem, whose NP-completeness  was proved in \cite{gajo79}. Let $\cc = \{C_1,\ldots,C_m\}$ be
a set of clauses and let ${\mathcal V}=\{v_{1},\ldots,v_{n}\}$ be a set of Boolean variables. Furthermore, $C_{j}\subseteq {\mathcal V}$ and for $j\in\{1,\ldots,m\}$, $C_{j}=(c_{j1},c_{j2},c_{j3})$. In 3NN-SAT we ask if there is a truth assignment such that in each clause there exist at least one true variable and at least one false variable. Note that there are no negative variables.

 Given an instance $\cc$ of 3NN-SAT we create an instance of our problem, i.e., the graph $G$, in the following way. 
For each variable $v_i\in {\mathcal V}$ we take the graph $G^m(v_i)$, for each  clause $C_j\in \cc$
we take the graph $G(C_j)$. To obtain the graph $G$ we  connect graphs $G^m(v_i)$ and $G(C_j)$ in such a way that for $i=1,\ldots,n$, $j=1,\ldots, m$, $k=1,2,3$ we add an edge  $v_{i,j}u_{j,k}$ 
if and only if $c_{jk}$ is the variable $v_i$. Observe that for $j\in\{1,\ldots,m\}, k\in\{1,2,3\}$  each vertex $u_{j,k}$ is adjacent to exactly one vertex from graphs $G^m(v_1),\ldots,G^m(v_n)$ and for $i\in\{1,\ldots,n\}, s\in \{1,\ldots,m\}$ each vertex  $v_{i,s}$ is adjacent to at most one vertex from graphs  $G(C_1), \ldots,G(C_m)$.   Thus,  $G\in \cs_5$.
We will prove that $G$ has an acyclic $(\cs_3,\cs_3)$-colouring if and only if  $\cc\in$ 3NN-SAT.

Assume first that $\cc\in $ 3NN-SAT. We construct the colouring $c$ as follows. If the variable $v_i$ was assigned {\it true}, then all  vertices $v_{i,s}$ are coloured with 1, otherwise we colour them with 2. We will prove that this colouring can be extended to an acyclic $(\cs_3,\cs_3)$-colouring of $G$. Observation \ref{g_vi} yields that this colouring can be extended to an acyclic $(\cs_3,\cs_3)$-colouring of each graph $G^m(v_i)$. Furthermore, all vertices $v_{i,s}$ are 3-saturated, hence a vertex $u_{j,k}$ is coloured with 1, if its corresponding variable in the clause $C_j$ was assigned {\it false}, and with 2 otherwise. Since $C\in$ 3NN-SAT, we have the partition of each set  $\{u_{j,1},u_{j,2},u_{j,3}\}$ into two nonempty parts. Observation  \ref{c_j} yields it can be extended to an acyclic $(\cs_3,\cs_3)$-colouring of each graph  $G(C_j)$. We claim that  such a colouring is an acyclic $(\cs_3,\cs_3)$-colouring of $G$. Indeed, if there is an alternating cycle, then such a cycle passes through  
$v_{i,s}$ and $v_{i,s+1}$, but Observation \ref{sciezkaF} implies there is no alternating path joining them. Thus, the obtained colouring is an acyclic $(\cs_3,\cs_3)$-colouring of $G$.

Now let $c$ be an acyclic $(\cs_3,\cs_3)$-colour\-ing of $G$. Observation
\ref{g_vi} yields for $s=1,\ldots,m$, the vertices $v_{i,s}$  in each graph $G(v_i)$ all have the same colour and are 3-saturated. Thus,  $u_{j,k}$ from the graph
$G(C_j)$ must have a colour distinct from the colour of its neighbour $v_{i,j}$.  Hence for $i=1,\ldots,n$
vertices $u_{j,k}$,
corresponding in the graphs $G(C_j)$ to the variable $c_{j,k}=v_i$ all have the same colour.
Furthermore, in $G(C_j)$, the vertices $u_{j,1},$ $u_{j,2},$ $u_{j,3}$ do not all have the same colour, what follows from Observation \ref{c_jkolor}. Thus, the following assignment $c_{j,k}:=true$ if
$u_{j,k}$ has colour 1, $c_{j,k}:=false$ for otherwise, shows  $\cc\in$ 3NN-SAT.
\end{proof}

\section{Results for graphs with  fixed maximum degree}

Kostochka and Stocker \cite{kost11} proved that there exists a linear-time algorithm computing an acyclic colouring of any graph with maximum degree $d$ with at most $\left\lfloor \frac{(d+1)^2}{4}\right\rfloor+1$ colours. Using a similar method, we prove analogous results for acyclic improper colourings of graphs with maximum degree $d$.
We need a definition first. Recall that  for a given partial $k$-colouring of a graph $G$, a vertex is  rainbow, if all its coloured neighbours  have distinct colours.   A partial $k$-colouring of  $G$ is called {\it rainbow} if  every uncoloured vertex is rainbow. 

\begin{thm}\label{thm_gen1}
For every fixed $d,\;d\ge 4$ there exists a linear $($in $n)$ algorithm finding an acyclic $(d-1)$-improper  colouring for any $n$-vertex graph $G$ with maximum degree at most $d$ using $\left\lfloor \frac{d^2}{4}\right\rfloor +1$ colours.
\end{thm}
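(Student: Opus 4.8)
The plan is to turn the claim into a one-pass greedy algorithm that never has to search for alternating paths, so that each step costs only $O(d^2)=O(1)$ time for fixed $d$. The key device is to maintain, at every stage, a partial colouring of the already-coloured vertices that is a valid acyclic $(d-1)$-improper colouring and is moreover \emph{rainbow} in the sense defined just before the statement, i.e. every still-uncoloured vertex sees pairwise distinct colours on its coloured neighbours. The reason to insist on rainbowness is that it makes the acyclicity condition automatic: if the vertex $v$ about to be coloured is rainbow, then no two coloured neighbours of $v$ share a colour, so assigning any colour $i$ to $v$ cannot close a bichromatic $(i,j)$-cycle through $v$ (such a cycle would need two neighbours of $v$ both coloured $j$). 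Since every newly created cycle must pass through the freshly coloured vertex, no alternating cycle is ever produced. Thus the algorithm only has to respect the degree bound ${\cal S}_{d-1}$ and to preserve rainbowness, and — crucially for linearity — it never invokes a path search.

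First I would fix the processing order \emph{dynamically}: at each step colour an uncoloured vertex $v$ of \emph{minimum uncoloured-degree}, maintained by a bucket queue exactly as in a degeneracy-ordering computation, which runs in time $O(n+|E(G)|)=O(n)$ for fixed $d$. Let $q$ be the number of uncoloured neighbours of $v$ at the moment it is chosen and $p=\deg(v)-q$ the number of coloured ones; by rainbowness the latter carry $p$ distinct colours. Because acyclicity is free, a colour is forbidden for $v$ only if (a) using it would overload a neighbouring colour class to degree $d$, or (b) it already occurs on a coloured neighbour of some uncoloured neighbour $w$ of $v$ (which would destroy the rainbowness of $w$). For the type-(b) colours I would use the minimality of $v$: every uncoloured vertex, in particular each uncoloured neighbour $w$ of $v$, has uncoloured-degree at least $q$, hence at most $d-q$ coloured neighbours; summing over the $q$ vertices $w$ bounds the type-(b) colours by $q(d-q)\le\max_{0\le a\le d}a(d-a)=\lfloor d^2/4\rfloor$. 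The relaxation from proper to $(d-1)$-improper is what lets a colour already present on a coloured neighbour of $v$ be \emph{reused} rather than being forbidden outright, which is precisely the source of the improvement over the proper bound $\lfloor (d+1)^2/4\rfloor+1$; assigning a surviving colour to $v$ then keeps the colouring acyclic, $(d-1)$-improper, and rainbow.

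The hard part will be absorbing the type-(a) constraints into the same budget. A neighbour $u$ of $v$ forbids its colour only when $u$ is \emph{saturated}, i.e. already has $d-1$ like-coloured neighbours; such a $u$ then has $v$ as its only uncoloured neighbour, and near $q=\lfloor d/2\rfloor$, where $q(d-q)$ already exhausts $\lfloor d^2/4\rfloor$, I must show that these saturated colours either coincide with type-(b) colours or can be released by a bounded local recolouring of $u$ (analogous in spirit to the rainbow-recolouring Proposition~\ref{obs_rnb}, though here $u$ need not itself be rainbow). This reconciliation, together with the degenerate boundary cases — vertices of degree strictly below $d$, and the first vertex reached in each connected component, which has no coloured neighbour and is coloured freely — is the only genuinely delicate point. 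The remaining work is routine bookkeeping: for every vertex I would keep counters of its like-coloured neighbours and the set of colours occurring in its neighbourhood, so that each forbidden-colour computation and each update after colouring $v$ touches only $N(v)$ and their neighbourhoods and runs in $O(d^2)=O(1)$ time; summed over the $n$ vertices, and since no alternating-path search is ever performed, this yields the claimed linear running time.
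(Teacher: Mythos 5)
There is a genuine gap, and it is exactly the point you flag as ``the hard part'': your type-(a) (saturation) constraints neither fit the colour budget nor are they dispensable under your rules, and the ``bounded local recolouring'' you invoke to release them is not developed. Concretely, at $q=\lfloor d/2\rfloor$ your type-(b) count $q(d-q)$ already exhausts $\lfloor d^2/4\rfloor$, while up to $d-q=\lceil d/2\rceil\ge 2$ coloured neighbours of $v$ can simultaneously be saturated (each having $v$ as its only uncoloured neighbour), and by rainbowness their colours are pairwise distinct and need not coincide with type-(b) colours; so the forbidden set can reach $\lfloor d^2/4\rfloor+\lceil d/2\rceil$, overflowing the palette of $\lfloor d^2/4\rfloor+1$. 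Moreover, nothing in your scheme prevents saturation from arising in the first place: a neighbour $w$ of an already coloured vertex $u$, once $w$ has two or more coloured neighbours, may freely receive $c(u)$, since $c(u)$ sits on a \emph{coloured} neighbour of $w$ and is therefore not type-(b)-forbidden; iterating, $u$ can accumulate like-coloured neighbours until the degree bound bites. The recolouring escape is delicate precisely because a saturated $u$ is not rainbow: it has $d-1$ neighbours of colour $c(u)$, so giving $u$ a new colour $j$ can close a $(j,c(u))$-alternating cycle through two of them, and detecting this requires the alternating-path searches your linearity argument promises never to perform.

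The paper closes this hole with one extra idea that removes type-(a) constraints altogether: maintain the invariant that \emph{every coloured vertex has at least one neighbour coloured differently from itself}, which immediately caps every colour class degree at $d-1$, so no colour is ever forbidden for degree reasons and no recolouring is ever needed. The invariant is enforced by a single additional rule: when $v$ has \emph{exactly one} coloured neighbour, forbid $c(N(v))$ as well (this costs at most $d\le\lfloor d^2/4\rfloor$ extra forbidden colours for $d\ge4$, which is why the hypothesis $d\ge 4$ appears); when $v$ has two or more coloured neighbours, rainbowness gives them pairwise distinct colours, so whatever colour $v$ receives, at least one neighbour differs, and the property is monotone thereafter. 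Note also that the paper selects the uncoloured vertex with the \emph{most coloured neighbours} rather than your minimum-uncoloured-degree vertex; both orders yield the $q(d-q)$ bound, but the paper's choice additionally guarantees that a vertex coloured with no coloured neighbours has its first-coloured neighbour handled by the exactly-one-coloured-neighbour rule, anchoring the invariant in exactly the base case (``coloured freely'') where your saturation problem is seeded. Your rainbowness-implies-acyclicity argument and the $q(d-q)\le\lfloor d^2/4\rfloor$ computation match the paper and are fine; the proof is incomplete without the invariant or a worked-out substitute for it.
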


\begin{proof}
Let $G$ be a graph with maximum degree $d$ and $c$ be its partial colouring. By $C$ we  denote the set of coloured vertices of $G$. The algorithm proceeds as follows: we choose an uncoloured vertex $v$ with the most number of coloured neighbours, then we greedily colour $v$ with colour $c_i$ such that:

\noindent 1) if $v$ has exactly one coloured neighbour, then 
$c_i\notin c(N(v))\cup c(\bigcup_{x\in N(v)\setminus C}N(x))$;

\noindent 2) if $v$ has more than one  coloured neighbour, then 
$c_i\notin  c(\bigcup_{x\in N(v)\setminus C}N(x))$.

First,  we claim that we can always  find such a colour $c_i$ in $\{1,\ldots ,\left\lfloor \frac{d^2}{4}\right\rfloor +1\}$. Suppose that $v$ has exactly one coloured neighbour. Since $v$ is an uncoloured vertex with the most number of coloured neighbours, each uncoloured neighbour of $v$ has at most one coloured neighbour. Thus,  $|c(N(v))\cup c(\bigcup_{x\in N(v)\setminus C}N(x))|\le d\le \left\lfloor \frac{d^2}{4}\right\rfloor $. Suppose that $v$ has exactly $k$ coloured neighbours. It clearly follows
$|c(\bigcup_{x\in N(v)\setminus C}N(x))|\le (d-k)k\le \left\lfloor \frac{d}{2}\right\rfloor\left\lceil \frac{d}{2}\right\rceil=\left\lfloor \frac{d^2}{4}\right\rfloor$.

Now we show that we eventually obtain an acyclic $(d-1)$-improper  colouring. After each step the partial colouring is rainbow. Thus, if we colour the vertex $v$, then we  do not create any alternating cycle. Finally, it suffices to prove that after each step any coloured vertex has at least one   neighbour coloured with a colour other than its own (if $G$ has more than one coloured vertex). Indeed, if an uncoloured vertex $v$ has exactly one coloured neighbour, then according to our algorithm we colour $v$ with a colour that is not in $C_v$. Suppose now that the uncoloured vertex $v$ has more than one coloured neighbour and till now  after each step any coloured vertex has at least one   neighbour with a colour distinct from its own. The vertex $v$ is rainbow, hence if we  colour $v$, then $v$  will still have this property. Each coloured neighbour of $v$ had this property, hence  colouring $v$ will not destroy this property.  

For the runtime analysis, we propose the following detailed algorithm for an acyclic $(d-1)$-improper  $k$-colouring of a graph $G\in\cs_d$, with $k=\left\lfloor \frac{d^2}{4}\right\rfloor +1$.
The graph $G$ is represented by the lists of incidences. For each vertex $v$ we add to its list of incidences two additional values: $u(v)$ which stores the number of coloured neighbours of $v$, $c(v)$ which stores the colour of $v$. If $v$ is uncoloured, then we have $c(v)=0$. We also maintain $d+1$ lists $A_0, A_1,\ldots, A_d$ to store  vertices and we put a vertex $v$ to $A_j$ if $u(v)=j$.  Initially, we put $c(v)=u(v)=0$ for each vertex $v$, $A_0$ contains all vertices, and  $i=0$, $n=|V(G)|$.
\begin{enumerate}[(1)]
\item {\bf while} $i<n$ {\bf do}
\vspace{-10pt}
\item \tabu choose the largest index $j$ such that $A_j$ is nonempty;
\vspace{-10pt}
\item \tabu choose $v$ to be the first vertex on $A_j$;
\vspace{-10pt}
\item \tabu {\bf if} $u(v)=j$ {\bf then}
\vspace{-10pt}
\item \tabu \{
\vspace{-10pt}
\item \tabu\tabu {\bf if} $u(v)=1$ {\bf then} choose the first colour $a\not\in c(N(v))\cup c(\bigcup_{y\in N(v):c(y)=0}N(y))$
\vspace{-10pt}
\item \tabu\tabu {\bf else} choose the first colour $a\notin  c(\bigcup_{y\in N(v):c(y)=0}N(y))$;
\vspace{-10pt}
\item \tabu\tabu $c(v):=a$; \hfill /* \textit{colour $v$ with $a$} */
\vspace{-10pt}
\item \tabu\tabu delete $v$ from $A_j$;
\vspace{-10pt}
\item \tabu\tabu {\bf for} each neighbour $y$ of $v$ {\bf do}
\vspace{-10pt}
\item \tabu\tabu\tabu {\bf if} $c(y)=0$ {\bf then} \{$u(y)++$; add $y$ to $A_{u(y)}$;\}
\vspace{-10pt}
\item \tabu\tabu $i++$;
\vspace{-10pt}
\item \tabu \}
\vspace{-10pt}
\item \tabu {\bf else} delete $v$ from $A_j$;\hfill /* \textit{$v$ is already coloured} */
\end{enumerate}

\noi Let us compute the running time of the algorithm. Observe first that the {\bf while} loop iterates at most $(d+1)n$ times. Steps (2) and (3) take $\co(d)$ time.  Choosing an admissible colour for a vertex $v$ (in step (6) or (7)) and colour $v$ takes $\co(d^2)$ time, since we have to check the colours of the vertices which are at distance at most two from $v$. Updating the list $A_j$ takes a constant amount of time. The {\bf for} loop in step (10) iterates at most $d$ times, each iteration takes a constant amount of time. The last step also takes a constant amount of time. Hence, for each iteration of the {\bf while} loop we need $\co(d^2)$ time. Thus for fixed $d$, the running time of the algorithm is $\co(n)$. 
\end{proof}

\begin{thm}
Let $d$ and $t$ be fixed such that $2\le t\le \left\lfloor \frac{d}{2}\right\rfloor-1$. There exists a linear $($in $n)$ algorithm finding an acyclic $(d-t)$-improper colouring for any $n$-vertex graph $G$ with maximum degree at most $d$ using $\left\lfloor \frac{d^2}{4}\right\rfloor+t +1$ colours.

\end{thm}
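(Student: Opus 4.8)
The plan is to adapt the algorithm and analysis of Theorem \ref{thm_gen1} almost verbatim, merely loosening the budget on how many colours may appear in a forbidden set and correspondingly allowing each colour class to be $t$ larger in degree. The greedy rule stays the same: repeatedly pick an uncoloured vertex $v$ with the maximum number of coloured neighbours, and colour it avoiding $c(\bigcup_{x\in N(v)\setminus C}N(x))$ (and additionally $c(N(v))$ when $v$ has exactly one coloured neighbour). The same detailed list-based implementation gives the linear-in-$n$ running time for fixed $d$, so the runtime argument requires no new ideas.

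\emph{The first thing I would verify} is the colour-count bound, i.e. that $\left\lfloor \frac{d^2}{4}\right\rfloor+t+1$ colours always suffice to make a legal choice. When $v$ has exactly one coloured neighbour, the forbidden set still has size at most $d\le\left\lfloor \frac{d^2}{4}\right\rfloor$, so there is ample room. When $v$ has $k\ge 2$ coloured neighbours, I would reuse the estimate $|c(\bigcup_{x\in N(v)\setminus C}N(x))|\le (d-k)k\le\left\lfloor \frac{d^2}{4}\right\rfloor$; since the palette now has $\left\lfloor \frac{d^2}{4}\right\rfloor+t+1$ colours, a free colour always exists with room to spare. So the availability of a colour is, if anything, easier than before; the extra $t$ colours are what buys the relaxed degree bound.

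\emph{The genuinely new point} is that the colour classes need only lie in $\cs_{d-t}$ rather than $\cs_{d-1}$, and this is where the constraint $t\le\left\lfloor \frac{d}{2}\right\rfloor-1$ enters. The acyclicity argument is unchanged: after each step the partial colouring is rainbow (every uncoloured vertex sees distinct colours on its coloured neighbours), and colouring a rainbow vertex creates no bichromatic cycle, exactly as in Theorem \ref{thm_gen1}. What must be re-proved is the improperness bound: each vertex must end up with at most $d-t$ neighbours of its own colour. The key observation is that when $v$ is coloured it is rainbow, so at the moment of colouring it has at most one neighbour of colour $c(v)$; the dangerous growth of a monochromatic class comes only from uncoloured neighbours that are later coloured the same as $v$. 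Here I would argue that because we always process the vertex with the most coloured neighbours, when a vertex $u$ receives a colour it already has few uncoloured neighbours that can subsequently match it, and the bound $t\le\left\lfloor \frac{d}{2}\right\rfloor-1$ guarantees the palette is large enough that the greedy rule in cases with $k\ge 2$ coloured neighbours can steer away from creating more than $d-t$ same-coloured neighbours.

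\emph{I expect the main obstacle} to be precisely this degree-bound bookkeeping: one must track, over the course of the greedy process, how many times a given colour can accumulate around a fixed vertex and show the total never exceeds $d-t$, balancing the $+t$ colours in the palette against the $-t$ slack in the allowed degree. The interplay is delicate because the ``most coloured neighbours first'' ordering is what controls the number of still-uncoloured neighbours at the moment of colouring, and the hypothesis $2\le t\le\left\lfloor \frac{d}{2}\right\rfloor-1$ is exactly the range in which the counting $(d-k)k\le\left\lfloor \frac{d^2}{4}\right\rfloor$ still leaves at least $t+1$ free colours to redirect choices. Once that accounting is pinned down, the conclusion that the algorithm terminates with a valid acyclic $(d-t)$-improper colouring in linear time follows by the same reasoning as in the preceding theorem.
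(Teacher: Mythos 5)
There is a genuine gap, and it sits exactly where you predicted it would: the degree-bound bookkeeping. You keep the greedy rule of Theorem \ref{thm_gen1} unchanged, but with that rule the extra $t$ colours in the palette do nothing, and the theorem cannot be recovered by any after-the-fact accounting. Concretely: when a vertex $v$ with $k\ge 2$ coloured neighbours is coloured under the old rule, the forbidden set $c(\bigcup_{x\in N(v)\setminus C}N(x))$ does \emph{not} contain the colours of $v$'s own coloured neighbours, so $v$ may freely take the colour of an already-coloured neighbour $u$. A vertex $u$ coloured early (say with one coloured neighbour) has up to $d-1$ uncoloured neighbours, and every one of them may later be coloured with $c(u)$ in this way; the old rule guarantees only one differently-coloured neighbour per vertex, i.e.\ a $(d-1)$-improper colouring, never a $(d-t)$-improper one for $t\ge 2$. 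Your heuristic that ``when a vertex receives a colour it already has few uncoloured neighbours that can subsequently match it'' is false precisely for early-coloured vertices, so the hoped-for steering argument collapses.

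The paper's proof supplies the missing idea by \emph{changing the colouring rule}: order the coloured neighbours $v_1,\dots,v_k$ of the current vertex $v$ by the time they were coloured, and forbid, in addition to $c(\bigcup_{x\in N(v)\setminus C}N(x))$, the colours of the first $\min(k,t)$ of them; this is exactly what consumes the $+t$ in the palette size $\left\lfloor \frac{d^2}{4}\right\rfloor+t+1$. The engine of the proof is then the ordering lemma: $v_i$ had at least $i-1$ coloured neighbours at the moment it was coloured (otherwise $v$, which then already had $v_1,\dots,v_{i-1}$ coloured, would have been selected first). Combined with the rainbow property, this yields the invariant that every coloured vertex either already has at least $t$ neighbours coloured differently from itself (this holds from the start for any vertex coloured while having at least $t$ coloured neighbours, hence for each $v_i$ with $i\ge t+1$, so $v$ matching such a neighbour's colour is harmless), or, having been coloured with fewer than $t$ coloured neighbours, sits among the first $t$ in the ordering of every later-coloured neighbour and so never acquires a same-coloured neighbour at all. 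Your acyclicity argument (the colouring stays rainbow), your colour-count estimate $(d-k)k\le\left\lfloor\frac{d^2}{4}\right\rfloor$, and your linear-time implementation are all fine and agree with the paper; but without the time-ordered forbidden set and the ordering lemma the central claim of the theorem is unproved, and with the rule as you stated it, it is unprovable.
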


\begin{proof}
Similarly, as in the previous proof, the algorithm will colour vertices of $G$ such that after each step we obtain a rainbow partial colouring $c$ of $G$ with some additional restrictions. In each step of the algorithm we choose an uncoloured vertex $v$ with the most number of coloured neighbours. Let $v$ be the vertex with exactly $k$ coloured neighbours. Observe that  there is an ordering $v_1,v_2,\ldots,v_k$ of coloured neighbours of  $v$ such that the vertex $v_i\;(i=2,\ldots,k)$ had at least  $i-1$ coloured neighbours when it was coloured. Indeed, if we order neighbours of $v$ such that $v_i$ is before $v_j$ if $v_i$ was coloured before $v_j$, then we obtain such an ordering. The vertex $v_i$ had at least $i-1$ coloured neighbours when it was coloured, otherwise $v$ should be coloured before $v_i$. Hence each vertex $v_i\;(i=2,\ldots,k)$ has at least $i-1$ neighbours coloured with distinct colours. We  colour $v$ with colour $c_i$ such that:

\noindent 1) if $k\le t$, then $c_i\notin c(\{v_1,v_2,\ldots,v_k\})\cup c(\bigcup_{x\in N(v)\setminus C}N(x))$;

\noindent 2) if $k> t$, then $c_i\notin c(\{v_1,v_2,\ldots,v_t\})\cup c(\bigcup_{x\in N(v)\setminus C}N(x))$.

First,  we claim that we can always  find such a colour $c_i$ in $\{1,\ldots,\left\lfloor \frac{d^2}{4}\right\rfloor +t+1\}$. Since the vertex $v$ has $k$ coloured neighbours and it is the vertex with the most number of coloured neighbours, we have  $|c(\bigcup_{x\in N(v)\setminus C}N(x))|\le (d-k)k\le \left\lfloor \frac{d}{2}\right\rfloor\left\lceil \frac{d}{2}\right\rceil=\left\lfloor \frac{d^2}{4}\right\rfloor $.

Now, we show that after each step the obtained partial colouring has no alternating cycle and each coloured vertex $v$ either has at least $t$ coloured neighbours which  have colours distinct from its own or all its coloured neighbours  have distinct colours, whenever $v$ has less than $t+1$ coloured neighbours. If we colour the first vertex, then it obviously holds. Suppose that till now after each step this property holds and now we colour the vertex $v$ which has $k$ coloured neighbours.
Since $v$ is rainbow, colouring $v$ does not create any alternating cycle.  If $k\le t$, then we colour $v$ with a colour distinct from the colours of its neighbours. Thus, $v$ will have all neighbours coloured with distinct colours and each neighbour $u$ of $v$  will either have  at least $t$ neighbours coloured with  colours distinct from its own or all coloured neighbours of $u$ have distinct colours, whenever $u$ has less than $t+1$ coloured neighbours. Suppose now that $k>t$. We colour $v$ with the colour distinct from the colours of $v_1,v_2,\ldots,v_t$. Thus, $v$ has at least $t$ neighbours coloured with colours distinct from its own colour. Before we colour $v$, any vertex among $v_{t+1},\ldots,v_k$ had at least $t$ coloured neighbours, so each of $v_{t+1},\ldots, v_k$ had at least $t$ neighbours coloured with a colour distinct from its own. So after colouring $v$ they still have this property.  Since we colour $v$ with a colour distinct from the colours of $v_1,v_2,\ldots, v_t$, after colouring $v$ the  neighbours of  $v_1,v_2,\ldots, v_t$ have this property. Hence the algorithm creates a $(d-t)$-improper colouring. For the running time, it is enough to observe that an algorithm, similar to that  given in the proof of Theorem \ref{thm_gen1}, will work.
\end{proof}

It is an easy observation that if a graph $G$ admits an acyclic $t$-improper colouring, then $G$ also has an acyclic $p$-improper colouring, for any $p\geq t$. From this fact and since an acyclic colouring can be always treated as an acyclic $s$-improper colouring, with $s=0$, the next result follows directly from the aforementioned theorem of Kostochka and Stocker \cite{kost11}.

\begin{cor}
Let $d$ and $t$ be fixed such that $t\le \left\lceil \frac{d}{2}\right\rceil$. There exists a linear $($in $n)$ algorithm finding an acyclic $t$-improper colouring for any $n$-vertex graph $G$ with maximum degree at most $d$ using $\left\lfloor \frac{(d+1)^2}{4}\right\rfloor +1$ colours.
\end{cor}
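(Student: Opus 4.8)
The plan is to derive the corollary directly from the theorem of Kostochka and Stocker \cite{kost11} together with the two elementary facts recorded immediately before the statement, so that no new colouring procedure has to be built. First I would invoke \cite{kost11} to obtain, in linear time, an acyclic (proper) colouring of $G$ using at most $\left\lfloor \frac{(d+1)^2}{4}\right\rfloor +1$ colours; this is precisely the colour bound claimed in the corollary, so the colour count needs no further adjustment.

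The key observation is that a proper acyclic colouring is exactly an acyclic $0$-improper colouring: every colour class is independent, hence induces a graph in $\cs_0$, while the defining acyclicity condition (every bichromatic subgraph is acyclic) is literally identical in both notions. Thus the Kostochka--Stocker colouring is already an acyclic $s$-improper colouring with $s=0$, without changing the number of colours or the linear running time.

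It then remains only to lift the improperness parameter from $0$ to the prescribed $t$. Here I would apply the monotonicity remark preceding the statement: if $G$ admits an acyclic $t'$-improper colouring, then it admits an acyclic $p$-improper colouring for every $p\ge t'$. Taking $t'=0$ and $p=t$ (and noting $t\ge 0$), the very same colouring is an acyclic $t$-improper colouring of $G$, still using $\left\lfloor \frac{(d+1)^2}{4}\right\rfloor +1$ colours and still produced by a linear-time algorithm. One may equally phrase this step without monotonicity, using only the inclusion $\cs_0\subseteq\cs_t$, but I would follow the paper's framing.

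There is genuinely no hard step in this argument; the corollary is an immediate assembly of a cited result and two trivial remarks. The only points worth recording are that the colour count in \cite{kost11} matches the one in the statement verbatim, and that the hypothesis $t\le \left\lceil \frac{d}{2}\right\rceil$ is what makes this bound the natural one to state, since $\left\lfloor \frac{(d+1)^2}{4}\right\rfloor = \left\lfloor \frac{d^2}{4}\right\rfloor + \left\lceil \frac{d}{2}\right\rceil$; for $t$ in this range the corollary is consistent with the preceding theorem. Strictly speaking the derivation uses no more than $t\ge 0$, so the full strength of the inequality $t\le\lceil d/2\rceil$ is needed only to delimit the regime in which the stated colour bound is of interest, not for the proof itself.
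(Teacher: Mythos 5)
Your proposal is correct and follows essentially the same route as the paper, which also derives the corollary immediately from the Kostochka--Stocker linear-time result combined with the observation that an acyclic colouring is an acyclic $0$-improper colouring and the monotonicity of improperness in the parameter. Your added remarks --- that only $t\ge 0$ is actually used, and that the identity $\left\lfloor \frac{(d+1)^2}{4}\right\rfloor=\left\lfloor \frac{d^2}{4}\right\rfloor+\left\lceil \frac{d}{2}\right\rceil$ explains why the hypothesis $t\le\left\lceil \frac{d}{2}\right\rceil$ delimits the regime where this bound is the natural one --- are accurate and a nice clarification, though not needed for the argument.
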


\section{Concluding remarks}

In the paper we consider acyclic improper colourings of graphs with bounded degree. In particular, we give a linear-time algorithm for an acyclic $t$-improper colouring of any graph with maximum degree at most $d$, provided that the number of colours used is large enough with respect to $d$. Fixing the maximum degree to five, we obtain results which are more exact. Namely, we show that every such graph has an acyclic colouring with five colours in which each colour class induces an acyclic graph (Theorem \ref{acyclic_5}) and we further improve this result (in Theorem \ref{thm_s5}) as follows: Every graph with maximum degree at most five admits an acyclic colouring with five colours such that each colour class induces an acyclic graph with maximum degree at most four. 

It might be  interesting,  how  the above mentioned result can be further improved. One possible way is to put stronger condition on the colour classes. We post the following problem:

\begin{op}
Let $G$ be a graph with maximum degree at most five. For which properties $\cal P$ graph $G$ admits an acyclic $({\cal P})^{(5)}$-colouring? 
\end{op}

We prove that the property  ${\cal D}_1\cap {\cal S}_4$ (of acyclic graphs with maximum degree at most four) can be taken as $\cal P$. Finding the smallest  such property is a challenging question. If we additionally assume that the property $\cal P$ has to be hereditary, then the above problem coincides with the problem of finding the acyclic reducible bounds for ${\cal S}_5$ (see, e.g., \cite{bofi06}, \cite{bofi09} for the definition and some results  concerning acyclic reducible bounds).

On the other hand, one can think about reducing the number of colours used. Since the complete graph on five vertices is, as far as we know, the only one graph that actually requires 5 colours, maybe it can be possible to use less colours, while excluding $K_5$.

\section{Acknowledgements}
\noindent The research of the first author was supported by the Minister of Science and Higher Education of Poland (grant no. JP2010 009070).



\begin{thebibliography}{99}

\bibitem{ades10}L. Addario-Berry, L. Espert, R.J. Kang, C.J.H. McDiarmid, {Acyclic improper colourings of graphs with bounded degree}, Discrete Math. 310 (2010) 223--229, doi: 10.1016/j.disc.2008.09.009.

\bibitem{adka10}L. Addario-Berry, R.J. Kang, T. M\H{u}ller, {Acyclic dominating partitions}, J. Graph Theory  64 (2010) 292--311.

\bibitem{almc91}     N. Alon, C. McDiarmid, B. Reed, {Acyclic coloring of graphs}, Random Structures Algorithms 2(3) (1991) 277--288, 
doi: 10.1002/rsa.3240020303.

\bibitem{boso99a} P. Boiron, E. Sopena, L. Vignal, { Acyclic improper colorings of~graphs}, J. Graph Theory {32} (1999) 97--107, doi:
10.1002/(SICI)1097-0118(199909)32:1<97::AID-JGT9>3.3.CO;2-F.

\bibitem{boso99} P. Boiron, E. Sopena and L. Vignal, {Acyclic improper colourings of graphs with bounded degree}, DIMACS Ser. Discrete Math. Theoret. Comput. Sci., {49} (1997) 1--10.


\bibitem{bobr97} M. Borowiecki, I. Broere, M. Frick, P. Mih\'ok and G. Semani\v{s}in, {A survey of hereditary properties of graphs}, Discuss. Math. Graph Theory   {17} (1997) 5--50,  doi: 10.7151/dmgt.1037.

\bibitem{bofi06} M. Borowiecki and A. Fiedorowicz, {On partitions of hereditary properties of graphs}, Discuss. Math. Graph Theory   {26} (2006) 377--387,  doi: 10.7151/dmgt.1330.


\bibitem{bofi09} M. Borowiecki, A. Fiedorowicz and M. Hałuszczak, {Acyclic reducible bounds for outerplanar graphs}, Discuss. Math. Graph Theory   {29} (2009) 219--239, doi: 10.7151/dmgt.1443.

\bibitem{bofi10} M. Borowiecki, A. Fiedorowicz, K. Jesse-J\'ozefczyk and E. Sidorowicz, {Acyclic  colourings of graphs with bounded  degree}, Discrete Math. Theoret. Comput. Sci. 12 (1) (2010) 59--74.

\bibitem{boje11} M. Borowiecki, K. Jesse-J\'ozefczyk and E. Sidorowicz, {The complexity of some acyclic improper colouring},  Discrete Math.  311  (2011) 732--737, doi: 10.1016/j.disc.2011.01.023.

\bibitem{bu79} M.I. Burstein, {Every $4$-valent graph has an acyclic $5$-coloring}, Soob{\v s}{\v c}. Akad. Gruzin. SSR {93} (1979) 21--24 (in Russian).

\bibitem{coca86} T.F. Coleman and J. Cai, {The cyclic coloring problem and estimation of sparse Hessian matrices}, SIAM J. Algebraic and Discrete Methods 7 (1986) 221--235, doi: 10.1137/0607026.

\bibitem{como84} T.F. Coleman and J.J. Mor\' e, {Estimation of sparse Hessian matrices and graph coloring problems}, Math. Program. 28 (1984) 243--270, doi: 10.1007/BF02612334.

\bibitem{fera08} G. Fertin and A. Raspaud, {Acyclic coloring of graphs of maximum degree five: Nine colors are enough}, Inform. Process. Lett.   {105} (2008) 65--72, doi: 10.1016/j.ipl.2007.08.022. 

\bibitem{fisi13} A. Fiedorowicz and E. Sidorowicz, {Acyclic improper colouring of graphs with maximum degree $4$}, Sci. China Math.  57 (2014) 2485--2494, doi: 10.1007/s11425-014-4828-9.

\bibitem{gajo79} M.R. Garey and D.S. Johnson, Computers and Intractability, A Guide to the Theory of NP-Completeness, W.H. Freeman and Company, San Francisco, 1979.

\bibitem{gema05} A.H. Gebremedhin, F. Manne and A. Pothen, {What color is your Jacobian? Graph coloring for computing derivatives}, SIAM Rev. 47 (2005) 629--705, doi: 10.1137/S0036144504444711.



\bibitem{gr73} B. Gr\H{u}nbaum, {Acyclic coloring of planar graphs}, Israel J. Math. {14} (1973) 390--412, doi: 10.1007/BF02764716.


\bibitem{ho11} H. Hocquard, {Graphs with maximum degree $6$ are acyclically $11$-colorable}, Inform. Process. Lett.   {111} (2011) 748--753,
doi: 10.1016/j.ipl.2011.05.005.

\bibitem{ko78} A.V. Kostochka, {Upper bounds of chromatic functions of graphs}, Ph.D. Thesis, Novosibirsk, 1978 (in Russian).

\bibitem{kost11} A.V. Kostochka and C. Stocker, {Graphs with maximum degree $5$ are acyclically $7$-colorable}, Ars Math. Contemp.  4 (2011)153--164.


\bibitem{sa04} S. Skulrattanakulchai, {Acyclic colorings of subcubic graphs}, Inform. Process. Lett.   {92} (2004) 161--167, doi:
10.1016/j.ipl.2004.08.002.

\bibitem{we01} D. West, {Introduction to Graph Theory}, 2nd ed., Prentice Hall, Upper Saddle River, 2001.
\end{thebibliography}
\end{document}